\newtheorem{lemma}{Lemma}
\newtheorem{definition}{Definition}
\newtheorem{proposition}{Proposition}
\newcommand\scalemath[2]{\scalebox{#1}{\mbox{\ensuremath{\displaystyle #2}}}}
\begin{document}
\bstctlcite{BSTcontrol}

\title{Joint Beam Direction Control and Radio Resource Allocation in Dynamic Multi-beam LEO Satellite Networks}

\author{{Shuo~Yuan,~\IEEEmembership{Member,~IEEE},
			Yaohua~Sun,
			Mugen~Peng,~\IEEEmembership{Fellow,~IEEE},
			and~Renzhi~Yuan,~\IEEEmembership{Member,~IEEE}}
	\thanks{Copyright (c) 2015 IEEE. Personal use of this material is permitted. However, permission to use this material for any other purposes must be obtained from the IEEE by sending a request to pubs-permissions@ieee.org.}
	\thanks{This work was supported in part by the National Key R$\&$D Program of China under Grant 2022YFB2902600, in part by the National Natural Science Foundation of China under Grants 62371071, and in part by the Young Elite Scientists Sponsorship Program by the China Association for Science and Technology under Grant 2021QNRC001. (\emph{Corresponding author: Mugen Peng.})}
	\thanks{The authors are with the State Key Laboratory of Networking and Switching Technology, Beijing University of Posts and Telecommunications, Beijing 100876, China (e-mail: yuanshuo@bupt.edu.cn; sunyaohua@bupt.edu.cn; pmg@bupt.edu.cn; renzhi.yuan@bupt.edu.cn).}
}

\maketitle

\begin{abstract}
	Multi-beam low earth orbit (LEO) satellites are emerging as key components in beyond 5G and 6G to provide global coverage and high data rate.
	To fully unleash the potential of LEO satellite communication, resource management plays a key role.
	However, the uneven distribution of users, the coupling of multi-dimensional resources, complex inter-beam interference, and time-varying network topologies all impose significant challenges on effective communication resource management.
	In this paper, we study the joint optimization of beam direction and the allocation of spectrum, time, and power resource in a dynamic multi-beam LEO satellite network.
	The objective is to improve long-term user sum data rate while taking user fairness into account.
	Since the concerned resource management problem is mixed-integer non-convex programming, the problem is decomposed into three subproblems, namely beam direction control and time slot allocation, user subchannel assignment, and beam power allocation.
	Then, these subproblems are solved iteratively by leveraging matching with externalities and successive convex approximation, and the proposed algorithms are analyzed in terms of stability, convergence, and complexity.
	Extensive simulations are conducted, and the results demonstrate that our proposal can improve the number of served users by up to two times and the sum user data rate by up to 68\%, compared to baseline schemes.
\end{abstract}

\begin{IEEEkeywords}
	Multi-beam satellite, beam direction control, radio resource management, matching theory.
\end{IEEEkeywords}

\IEEEpeerreviewmaketitle

\section{Introduction}

\label{sec:intro}
\IEEEPARstart{S}{atellite} communication has been identified as one of the key elements in beyond 5G and 6G to complement terrestrial networks to achieve seamless coverage and cost-efficient data services \cite{chen2023trendschallenges}.
Low earth orbit (LEO) satellites, operating at altitudes of $500-1500\,km$, are particularly attractive due to shorter propagation latency, higher data rates, and lower launch costs compared to geostationary earth orbit (GEO) satellites.
Currently, several commercial LEO communication satellite constellations have emerged, such as Starlink and OneWeb \cite{liu2021leosatellite}.

In highly dense satellite constellations, effectively scheduling communication resources is crucial for improving network capacity and addressing the uneven distribution of user traffic; however, this still presents many challenges to overcome \cite{lin2020integratedgsatellite}.
Firstly, severe inter-beam interference can exist when full-frequency reuse among beams is utilized.
Secondly, with the development of software-defined communication payload, an LEO satellite can flexibly allocate spectrum, power, time slots, and beams among users, and meanwhile beam direction can also be adjusted flexibly with the use of advanced antenna technology.
In general, from the perspective of network capacity, all these factors are coupled, and their joint consideration makes resource management highly complicated.
Thirdly, the dynamic nature of the LEO satellite network topology results in time-varying states of inter-beam interference and satellite-terrestrial link conditions, requiring resource management to adapt to such dynamics \cite{yuan2023softwaredefined,zhu2023timingadvance,yuan2023jointnetwork}.

\subsection{Related Works}

\emph{1) Radio Resource Management:} Recently, flexible resource allocation in satellite communication has received remarkable attention \cite{li2020spectrumoptimization,gu2022dynamiccooperative,kawamoto2020flexibleresource,choi2005optimumpower,lin2019jointbeamforming,abdu2021flexibleresource,zhou2021machinelearningbased,yuan2022adaptingdynamic}.
In \cite{li2020spectrumoptimization}, a market-based spectrum optimization problem for a multi-beam GEO satellite system is proposed to enhance spectrum efficiency, and the spectrum pricing model is built by utilizing the Hotelling model.
In \cite{gu2022dynamiccooperative}, the authors investigate cooperative spectrum sharing between LEO and GEO satellites to maximize the sum data rate of LEO satellite users under the constraints of beam transmission power and signal-to-interference ratio of GEO satellite users.
Assuming a GEO satellite equipped with a digital channelizer, the authors of \cite{kawamoto2020flexibleresource} study frequency resource allocation among beams to alleviate inter-beam interference and maximize network throughput.
In \cite{choi2005optimumpower}, a power allocation and multi-beam scheduling approach is proposed to match limited power resources with the heterogeneous traffic demands of unevenly distributed users while investigating the tradeoff between total capacity and user fairness.
The authors in \cite{lin2019jointbeamforming} investigate the problem of joint power allocation and beamforming for a GEO satellite-based communication system to improve the sum data rate, and propose a cooperative user pairing and an iterative penalty function-based beamforming schemes.

In addition, both subchannel assignment and power allocation are optimized in \cite{abdu2021flexibleresource} for a multi-beam GEO satellite communication system to meet user traffic demand while using minimum transmission power and bandwidth.
The authors of \cite{zhou2021machinelearningbased} formulate a joint power resource allocation and data transmission scheduling problem for a satellite-assisted internet of remote things network with the aim of maximizing the sum data rate, and a model-free reinforcement learning framework is designed to accommodate highly dynamic satellite communication environments.
In \cite{yuan2022adaptingdynamic}, a radio resource scheduling problem is investigated, intending to maximize the number of served users and the sum data rate in an integrated satellite-terrestrial network and a meta-critic learning based solution is proposed.

\emph{2) Beam Direction Control:} On the other hand, as the number of satellites increases, it will be possible for multiple satellites on different orbits to serve the same ground area \cite{su2019broadbandleo, xv2023jointbeam}.
At this time, satellite beams need to be pointed at proper locations to alleviate inter-beam interference and better serve unevenly distributed users with various traffic demands.
In \cite{yin2021beampointing}, a satellite beam direction optimization problem is formulated for a scenario with multiple non-GEO satellites sharing the same spectrum and a genetic algorithm based solution is proposed.
To minimize the number of activated beams while holding all users close to beam centers, the authors of \cite{bui2022jointbeam} investigate a joint beam direction and load balancing optimization problem with a fixed beam transmission power budget.
Aiming to maximize the total throughput in the concerned geographic area, the authors of \cite{takahashi2022dbfbasedfusion} investigate a beam fusion control optimization problem, where the control parameters include transmission power, beam gain, and beam direction.
In \cite{wang2022jointoptimization}, the authors focus on a beam hopping based and non-orthogonal-multiple-access enhanced multi-beam GEO satellite system, and beam transmission power, beam scheduling as well as time slot allocation among users are jointly addressed to minimize the gap between user demands and offered data rate.
In \cite{hu2020dynamicbeam}, a beam hopping optimization problem is formulated to guarantee the fairness of beam services and meanwhile maximize service quality, and deep reinforcement learning is utilized to tackle the problem.

\subsection{Motivations and Contributions}

\begin{table}[]
	\small
	\caption{Related Works}
	\centering
	\label{tab:relatedWorks}
	\scalebox{0.78}{
		\begin{tabular}{|c|cccccc|}
			\hline
			\multirow{2}{*}{Literature}                        & \multicolumn{6}{c|}{Target Scenario}                                                                                                                                                      \\ \cline{2-7} & \multicolumn{1}{c|}{Orbit Type} & \multicolumn{1}{c|}{\begin{tabular}[c]{@{}c@{}}Multiple\\ Satellites\end{tabular}} & \multicolumn{1}{c|}{\begin{tabular}[c]{@{}c@{}}Dynamic\\ Topology\end{tabular}} & \multicolumn{1}{c|}{\begin{tabular}[c]{@{}c@{}}Beam\\ Direction\end{tabular}} & \multicolumn{1}{c|}{Carrier}     & Power       \\ \hline
			\cite{li2020spectrumoptimization}                  & \multicolumn{1}{c|}{GEO}             & \multicolumn{1}{c|}{ }          & \multicolumn{1}{c|}{ }          & \multicolumn{1}{c|}{ }          & \multicolumn{1}{c|}{\Checkmark} &            \\ \hline
			\cite{gu2022dynamiccooperative}                    & \multicolumn{1}{c|}{LEO}             & \multicolumn{1}{c|}{\Checkmark} & \multicolumn{1}{c|}{ }          & \multicolumn{1}{c|}{ }          & \multicolumn{1}{c|}{\Checkmark} & \Checkmark \\ \hline
			\cite{kawamoto2020flexibleresource}                & \multicolumn{1}{c|}{GEO/LEO}         & \multicolumn{1}{c|}{ }          & \multicolumn{1}{c|}{ }          & \multicolumn{1}{c|}{ }          & \multicolumn{1}{c|}{\Checkmark} &            \\ \hline
			\cite{choi2005optimumpower}                        & \multicolumn{1}{c|}{GEO/LEO}         & \multicolumn{1}{c|}{ }          & \multicolumn{1}{c|}{ }          & \multicolumn{1}{c|}{ }          & \multicolumn{1}{c|}{ }          & \Checkmark \\ \hline
			\cite{lin2019jointbeamforming}                     & \multicolumn{1}{c|}{GEO}             & \multicolumn{1}{c|}{ }          & \multicolumn{1}{c|}{ }          & \multicolumn{1}{c|}{ }          & \multicolumn{1}{c|}{\Checkmark} & \Checkmark \\ \hline
			\cite{abdu2021flexibleresource}                    & \multicolumn{1}{c|}{GEO}             & \multicolumn{1}{c|}{ }          & \multicolumn{1}{c|}{ }          & \multicolumn{1}{c|}{ }          & \multicolumn{1}{c|}{\Checkmark} & \Checkmark \\ \hline
			\cite{zhou2021machinelearningbased}                & \multicolumn{1}{c|}{LEO}             & \multicolumn{1}{c|}{\Checkmark} & \multicolumn{1}{c|}{\Checkmark} & \multicolumn{1}{c|}{ }          & \multicolumn{1}{c|}{ }          & \Checkmark \\ \hline
			\cite{yuan2022adaptingdynamic}                     & \multicolumn{1}{c|}{LEO}             & \multicolumn{1}{c|}{\Checkmark} & \multicolumn{1}{c|}{\Checkmark} & \multicolumn{1}{c|}{ }          & \multicolumn{1}{c|}{\Checkmark} &            \\ \hline
			\cite{yin2021beampointing}                         & \multicolumn{1}{c|}{non-GEO}         & \multicolumn{1}{c|}{\Checkmark} & \multicolumn{1}{c|}{\Checkmark} & \multicolumn{1}{c|}{\Checkmark} & \multicolumn{1}{c|}{ }          &            \\ \hline
			\cite{bui2022jointbeam}                            & \multicolumn{1}{c|}{non-GEO}         & \multicolumn{1}{c|}{ }          & \multicolumn{1}{c|}{ }          & \multicolumn{1}{c|}{\Checkmark} & \multicolumn{1}{c|}{ }          &            \\ \hline
			\cite{takahashi2022dbfbasedfusion}                 & \multicolumn{1}{c|}{GEO/LEO}         & \multicolumn{1}{c|}{ }          & \multicolumn{1}{c|}{ }          & \multicolumn{1}{c|}{\Checkmark} & \multicolumn{1}{c|}{}           & \Checkmark \\ \hline
			\cite{wang2022jointoptimization,hu2020dynamicbeam} & \multicolumn{1}{c|}{GEO}             & \multicolumn{1}{c|}{ }          & \multicolumn{1}{c|}{\Checkmark} & \multicolumn{1}{c|}{\Checkmark} & \multicolumn{1}{c|}{ }          & \Checkmark \\ \hline
			This Work                                          & \multicolumn{1}{c|}{LEO}             & \multicolumn{1}{c|}{\Checkmark} & \multicolumn{1}{c|}{\Checkmark} & \multicolumn{1}{c|}{\Checkmark} & \multicolumn{1}{c|}{\Checkmark} & \Checkmark \\ \hline
		\end{tabular}}
\end{table}

In Table \ref{tab:relatedWorks}, existing literature about beam direction control and radio resource allocation for satellite communication is summarized.
Based on the literature review, it can be found that most existing works focus on flexible radio resource allocation with fixed beam direction for either one satellite, e.g., \cite{li2020spectrumoptimization,kawamoto2020flexibleresource,choi2005optimumpower,lin2019jointbeamforming}, or multiple satellites, e.g., \cite{gu2022dynamiccooperative,abdu2021flexibleresource,zhou2021machinelearningbased,yuan2022adaptingdynamic}.
In terms of beam direction control, some works \cite{yin2021beampointing, bui2022jointbeam} solely focus on the geographic distribution of users, neglecting the impact of the dynamic network topology introduced by the mobility of LEO satellites.
Although some works \cite{takahashi2022dbfbasedfusion,wang2022jointoptimization,hu2020dynamicbeam} are dedicated to jointly optimizing power allocation and beam direction, only scenarios with a single satellite are considered.
In summary, there are few works addressing the joint optimization of beam direction and multi-dimensional radio resource allocation for a target ground area in multi-beam multi-LEO-satellite networks with differentiated data rate demands of users and dynamic network topologies.

Further, we conclude the research motivations of this paper as follows:
\begin{itemize}
	\item
	      The joint optimization of multi-dimensional radio resource allocation and beam direction in multi-beam LEO satellite networks has not been thoroughly studied.
	      In addition to adapting to the uneven geographical distribution of users, data rate demands of individual users should also be considered to satisfy basic communication demands.
	\item
	      Most of the existing works on resource allocation for LEO satellite networks have not taken the impacts of dynamic satellite-terrestrial topology into account, which leads to time-varying inter-beam interference and communication link state.
	\item
	      The allocation of spectrum, transmission power, and time slots is tightly coupled with beam direction control, and their joint optimization is usually NP-hard, which is non-trivial to obtain an optimal solution within polynomial time.
\end{itemize}

With these motivations, in this work, a resource management problem is formulated for the downlink scenario in a dynamic multi-beam multi-satellite network, which takes into account both beam direction control and multi-dimensional radio resource allocation, and we aim to enhance long-term sum user data rate while improving user fairness.
To tackle the problem, we decouple the primal problem and provide a joint solution framework where beam direction control and time slot allocation, subchannel assignment, and power allocation are executed iteratively.
More specifically, we find that the beam direction control and time slot allocation subproblem can be modeled as a two sided many-to-one matching with externalities between beam center coordinate-time-slot units and beams.
In addition, the subchannel assignment subproblem can be modeled as a many-to-one matching with externalities between subchannel-time-slot units and users.
The power allocation subproblem is handled using successive convex approximation (SCA), which transforms it into a convex problem.
To solve the first two subproblems, we design two modified swap/negotiation operation based matching algorithms, while the third subproblem is solved using a general convex optimization tool.
The main contributions of this paper are concluded as follows.
\begin{itemize}
	\item
	      A joint beam direction control and multi-dimensional resource allocation problem is formulated for the downlink scenario in a time-slotted dynamic multi-beam multi-satellite network. The aim is to enhance the long-term sum user data rate of a target area while considering user fairness.
	      The formulated problem is mixed-integer non-convex programming, which is proved to be NP-hard.
	\item
	      In order to solve this joint optimization problem, we first decouple it into three subproblems, including beam direction control and time slot allocation, subchannel assignment, and power allocation.
	      The first two subproblems are modeled using many-to-one matching theory with externalities, and then two modified swap/negotiation operation based matching algorithms are proposed. Their stability, convergence, and complexity are analyzed.
	      Afterward, we invoke SCA method to iteratively update beam power allocation by solving an approximated convex problem.
	\item
	      Extensive simulations are conducted, demonstrating that the proposed beam direction control and radio resource allocation algorithms outperform the benchmarks in terms of long-term sum user data rate, the number of served users, and user fairness.
	      We also investigate the impacts of the number of beams configured at each satellite, the number of available subchannels per beam, and the maximum number of subchannels that can be allocated to each user.
	      In addition, the generality and robustness of our proposal under different user distributions are demonstrated.
\end{itemize}

The rest of the paper is organized as follows.
In Section \ref{sec:systModel}, we present the system model of the multi-beam multi-satellite downlink network and formulate the long-term data rate optimization problem.
Then, by decomposing the primal problem, an iterative solution framework for joint beam direction control and radio resource allocation is proposed in Section \ref{sec:solutionDesign}.
In Section \ref{sec:numericalRes}, simulation results along with analysis are provided.
Finally, conclusions are drawn in Section \ref{sec:conclusion}.

\begin{figure}[t]
	\centering
	\includegraphics[width=0.4\textwidth]{./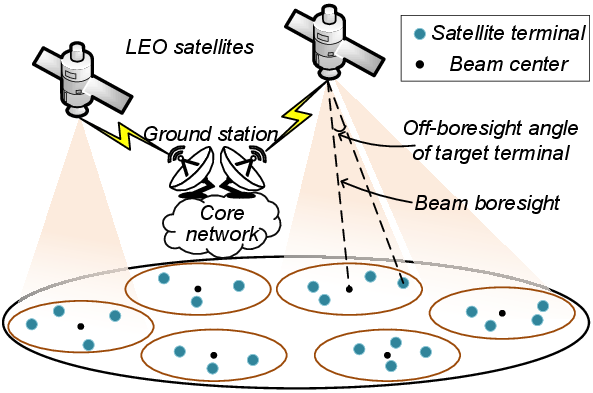}
	\caption{An illustration of the studied satellite network scenario.}
	\label{fig:NETtopology}
\end{figure}

\section{System Model and Problem Formulation}
\label{sec:systModel}

As shown in Fig. \ref{fig:NETtopology}, we consider a multi-beam multi-satellite downlink network, which consists of a set of LEO satellites, ground stations, and multiple user terminals.
Each LEO satellite is equipped with an array-fed reflector antenna, consisting of a feed array with $F$ feeds to generate $L$ beams.
Generally, the array-fed reflector antenna is characterized by two structural paradigms: single feed per beam and multiple feeds per beam \cite{chu2021robustdesign}.
This paper follows the approach of a single feed per beam, as seen in the modern LEO satellite antenna structure presented in \cite{gu2022dynamiccooperative, khan2023ratesplitting}, meaning that only one feed is required to generate one beam.
All user terminals are furnished with a single antenna and are located within a circular ground area $S$ with a radius of $R_s$ on the surface of the earth.
In addition, the satellite ground station integrates network management functions, encompassing global status collection and strategy generation \cite{sheng2017flexiblereconfigurable}.

We denote the set of user terminals and the set of LEO satellites as ${\mathcal N} = \{ 1, ..., n, ..., N\}$ and ${\mathcal M} = \{ 1, ...,m, ..., M\}$, respectively.
Each satellite $m$ has a set of beams, denoted by ${\mathcal L}_m$, and the set of all the satellite beams is denoted as ${\mathcal Q} = \{ {{\mathcal L}_1} \cup ... \cup {{\mathcal L}_M} \}$.
A full-frequency reuse scheme, where all the beams use the same frequency band, is adopted to raise resource utilization.
The user terminals associated with the same beam are served based on orthogonal frequency division multiple access (OFDMA), and the frequency band occupied by each beam is divided into $K$ subchannels, whose set is denoted by ${\mathcal K} = \{ 1, ...,k, ..., K\}$.
In addition, system operation time is divided into discrete time slots, whose set is denoted by ${\mathcal T} = \{ 1, ...,t ..., T\}$.
The network topology, similar to \cite{yuan2022adaptingdynamic} and \cite{cao2023edgeassistedmultilayer}, is assumed to be quasi-static within each time slot of length $\tau$ but exhibits variation between time slots owing to the mobility of LEO satellites.

\subsection{Channel Model}
\label{sec:channelModel}

For user terminal $n$ served by beam $q\in\mathcal Q$ whose beam center position is with Earth-Centered, Earth-Fixed (ECEF) coordinate $c$, user-satellite channel state $h_{q,c,k,n}^t$ on subchannel $k$ at time slot $t$ is modeled as
\begin{equation}
	\label{eq:channelGain}
	\begin{aligned}
		h_{q,c,k,n}^t = {G_{q,c,k,n}^{{\rm{tx}},t} \cdot G_n^{{\rm{rx}}} \cdot G_{q,k,n}^{t}}, \\
		{\forall q \in {\mathcal Q}, k \in {\mathcal K}, n \in {\mathcal N}, t \in {\mathcal T} },
	\end{aligned}
\end{equation}
where $G_{q,c,k,n}^{{\rm{tx}},t}$, $G_n^{{\rm{rx}}}$, and $G_{q,k,n}^{t}$ are the transmission antenna gain, the user receiving antenna gain, and the channel gain, respectively.
Based on the antenna structure, the transmission antenna gain between beam $q$ with beam center position $c$ and user terminal $n$ at time slot $t$, $G_{q,c,k,n}^{{\rm{tx}},t}$, is given by \cite{gu2022dynamiccooperative, khan2023ratesplitting, huang2020energyefficient}
\begin{equation}
	\label{eq:channelTransGain}
	G_{q,c,k,n}^{{\rm{tx}},t} = G_q^{{\rm{peak}}}{\left({\frac{{{J_1}\left( \mu \right)}}{{2\mu }} + 36\frac{{{J_3}\left( \mu \right)}}{{{\mu ^3}}}}\right)^2},
\end{equation}
where $G_q^{{\rm{peak}}} = \eta {(\frac{{\pi D {f}}}{\epsilon})^2}$ \cite{ippolito2008satellitecommunications} is the peak transmission antenna gain of beam $q$.
$\mu =2.07123\sin(\theta _{q,c,n}^{t})/\sin(\theta _{q,3\mathrm{dB}}^{t})$, where $\theta _{q,3{\rm{dB}}}^t$ denotes the $3 \rm{dB}$ angle of beam $q$, and ${\theta _{q,c,n}^{t}}$ signifies the off-boresight angle of user terminal $n$ relative to the boresight of beam $q$ with beam center position $c$.
The terms ${J_1}$ and ${J_3}$ represent the first and third order of first kind of Bessel functions, respectively.
In addition, $\eta$ denotes the aperture efficiency of the satellite antenna, $D$ is the diameter of the satellite antenna aperture, ${f}$ is the carrier frequency, and $\epsilon$ is the speed of light.

The off-boresight angle ${\theta _{q,c,n}^{t}}$, as shown in Fig. \ref{fig:NETtopology}, is contingent on the ECEF coordinates of the satellite to which beam $q$ belongs, the beam center position of beam $q$ and user terminal $n$.
Specifically, the ECEF coordinate of the satellite is denoted by $\omega_{q}^{t}$ and can be obtained from the precise ephemeris.
Denote the distance between the satellite to which beam $q$ belongs and user terminal $n$, the distance between the satellite and the beam center position, and the distance between user terminal $n$ and beam center position of beam $q$ as $d_{q,n}^t$, ${d_{q,c}^t}$, and $d_{q,c,n}^t$, respectively.
Then, off-boresight angle ${\theta _{q,c,n}^{t}}$ can be given by the law of cosines as
\begin{equation}
	\label{eq:offBeamAngle}
	\theta _{q,c,n}^t = \arccos \left(\frac{{{\left({d_{q,n}^t}\right)^2} + {\left({{d_{q,c}^t}}\right)^2} - {\left({d_{{q},c,n}^t}\right)^2}}}{{2d_{q,n}^t{d_{q,c}^t}}}\right).
\end{equation}
According to the configured beam center position, we can obtain
\begin{equation}
	\label{eq:distBeamCent}
	\begin{aligned}
		{\left({{d_{q,c}^t}}\right)^2} = {{\left\| {c - \omega _q^t} \right\|}_2^2}, \quad {\left({d_{{q},c,n}^t}\right)^2} = {{\left\| {c - {\omega _n}} \right\|}_2^2},
	\end{aligned}
\end{equation}
where $\left\| {a - b} \right\|_2$ is used to find the Euclidean distance between coordinate $a$ and coordinate $b$, and ${\omega _n}$ is the ECEF coordinate of user terminal $n$.

The user-satellite channel gain is modeled using a widely adopted channel fading model, which takes into account free-space path loss, atmospheric fading, and Rician small-scale fading \cite{yuan2022adaptingdynamic}.
The channel gain between the satellite to which beam $q$ belongs and user terminal $n$ on subchannel $k$ at time slot $t$, denoted as $G_{q,k,n}^{t}$, is modeled as
\begin{equation}
	\label{eq:channelFadingGain}
	G_{q,k,n}^{t} = {\left( {\frac{\epsilon}{{4\pi d_{q,n}^t{f}}}} \right)^2} A^{-1}\left( {d_{q,n}^t} \right) \rho _1 ,
\end{equation}
where $\rho _1 $ is the Rician fading factor, and the atmospheric loss $A(d_{q,n}^{t})$ is expressed as \cite{ippolito2008satellitecommunications}
\begin{equation}
	A(d_{q,n}^{t})=10^{\frac{d_{q,n}^{t}( 4.343\rho _2+\rho _3 )}{10H_q}},
\end{equation}
with $\rho _2$ and $\rho _3$ being the attenuation factors of the clouds and rain, respectively, and $H_q$ is the height of the satellite to which beam $q$ belongs.

\subsection{Problem Formulation}
\label{sec:problemForm}

Considering the practical constraints on the operational complexity of satellite antennas, beams are configured with a limited number of potential center coordinates on the surface of Earth.
We denote the set of optional beam center coordinates within the target ground area $S$ by ${\mathcal C}=\{1,...,c,...,C\}$.
Binary variable $e_{c,q}^t$ is used to indicate whether beam center candidate coordinate $c$ is selected for beam $q$ at time slot $t$, and we have
\begin{equation}
	\label{eq:beamCenterIndicator}
	e_{c,q}^t =
	\begin{cases}
		1, & \text{if coordinate } c \text{ is configured}   \\
		   & \text{for beam } {q} \text{ at time slot } {t}, \\
		0, & \text {otherwise. }
	\end{cases}
\end{equation}
In addition, we define ${a_{q,n}^t}$ as a binary user association indicator, where ${a_{q,n}^t}=1$ indicates that user terminal $n$ is served by beam $q$ at time slot $t$, otherwise, ${a_{q,n}^t}=0$.
In this paper, each user terminal is assumed to access the beam with the best received signal strength.
For example, the beam serving user terminal $n$ at time slot $t$ is
\begin{equation}
	q^{\ast} = \arg \max_{q\in \mathcal{Q}} \sum_{c\in \mathcal{C}}{e_{c,q}^{t}h_{q,c,k,n}^t p^{\prime}},
\end{equation}
where $p^{\prime}$ is the downlink reference signal transmission power.
Then, we have $a_{q^{\ast},n}^{t} = 1$ with $\sum_{q\in \mathcal{Q}}{a_{q,n}^{t}=1}$.
Moreover, user subchannel assignment is represented by a binary variable ${b_{k,n}^t}$, which is defined as
\begin{equation}
	\label{eq:subchannelAssignVar}
	{b_{k,n}^t} =
	\begin{cases}
		1, & \text{if subchannel } k \text{ is assigned to }      \\
		   & \text{user terminal } {n} \text{ at time slot } {t}, \\
		0, & \text {otherwise. }
	\end{cases}
\end{equation}

The signal-to-interference-plus-noise ratio (SINR) of user terminal $n$ served by beam $q$ on subchannel $k$ is given by
\begin{equation}
	\label{eq:subchannelSINR}
	\gamma _{q,c,k,n}^t = \frac{{h_{q,c,k,n}^tp_{q,k}^t}}{{{I_{{\rm{inter}}}} + {\sigma _{n}^2}}},
\end{equation}
where $p_{q,k}^{t}=p_{m,q}^{t}/K$ is the beam transmission power on subchannel $k$ with $p_{m,q}^{t}$ being the power allocated to beam $q$ by satellite $m$ at time slot $t$ for $q \in \mathcal{L}_m$, ${\sigma _{n}^2}$ is the additive white Gaussian noise, and ${I_{{\rm{inter}}}}$ is inter-beam interference, which is expressed as
\begin{small}
	\begin{equation}
		\label{eq:subchannelInterInf}
		{I_{{\rm{inter}}}} =  \sum\limits_{q' \in {\mathcal Q} \backslash q} \sum\limits_{n' \in {{\mathcal N}\backslash n}} \sum\limits_{c' \in {\mathcal C} \backslash c} { {a_{q',n'}^t b_{k,n'}^t e_{c',q'}^t}{h_{q',c',k,n}^tp_{q',k}^t} }.
	\end{equation}
\end{small}

Then, when user terminal $n$ is served by beam $q$ on subchannel $k$ at time slot $t$, the data rate is given by
\begin{equation}
	\label{eq:subchannelDataRate}
	R_{q,c,k,n}^t = {{B_{q,k}}}{\log _2}\left( {1 + \gamma _{q,c,k,n}^t} \right),
\end{equation}
where ${B_{q,k}} = {B_{q}} / K$ is the bandwidth of subchannel $k$ with ${B_{q}}$ being the bandwidth of beam $q$.
As a result, the general expression of the data rate of user terminal $n$ during time slot $t$ is
\begin{equation}
	\label{eq:terminalDataRate}
	R_{q,n}^t = \sum\limits_{k \in {{\mathcal K}}} \sum\limits_{c \in {{\mathcal C}}} {a_{q,n}^tb_{k,n}^t}{e_{c,q}^t}R_{q,c,k,n}^t.
\end{equation}

To achieve a trade-off between network throughput and user fairness, we involve $\alpha$-proportional utility function $U^{\alpha}(\cdot)$ \cite{zhao2017spectrumallocation} to formulate the objective of resource management, which is defined as
\begin{equation}
	\label{eq:alphaPropUF}
	U^{\alpha}(x)=
	\begin{cases}
		 & \left( 1-\alpha \right) ^{-1}x^{1-\alpha}, 0 \le \alpha <1, \\
		 & \log \left( x \right), \alpha = 1.                          \\
	\end{cases}
\end{equation}
and the $\alpha$-utility of user terminal $n$ during $T$ time slots is $U^{\alpha}( \sum_{t\in \mathcal{T}}{\sum_{q\in \mathcal{Q}}{R_{q,n}^{t}}}) $.

This paper aims to jointly optimize beam direction $\boldsymbol{\rm{e}} = \{ e_{c,q}^t , c \in {{\mathcal C}}, q \in {\mathcal Q}, t \in {{\mathcal T}} \}$, user subchannel assignment $\boldsymbol{\rm{b}} = \{b_{k,n}^t, k \in {{\mathcal K}}, n \in {\mathcal N}, t \in {{\mathcal T}}\}$, and beam power allocation $\boldsymbol{\rm{p}} = \{ {p_{m,q}^t}, m \in {{\mathcal M}}, q \in {{\mathcal Q}},t \in {{\mathcal T}}\}$ to maximize the sum of $\alpha$-utility of all user terminals.
The optimization problem is formally given as follows.
\begin{subequations}
	\label{eq:primalProblem}
	\begin{align}
		\begin{split}
			\label{eq:ppOJ}
			&\mathop {\max }\limits_{\boldsymbol{\rm{b,e,p}}} \quad
			\sum_{n\in \mathcal{N}}{U^{\alpha}( \sum_{t\in \mathcal{T}}{\sum_{q\in \mathcal{Q}}{R_{q,n}^{t}}})}
		\end{split}
		\\
		\begin{split}
			\emph{s.t.}
			\label{eq:constSubcAssi}
			&\sum\limits_{n \in {\mathcal N}} {a_{q,n}^tb_{k,n}^t} \le 1,\forall q \in {\mathcal Q},k \in {\mathcal K},t \in {\mathcal T},
		\end{split}
		\\
		\begin{split}
			\label{eq:constSubcAssiMax}
			&\sum\limits_{k \in {\mathcal K}} {a_{q,n}^tb_{k,n}^t} \le K^{thr},\forall q \in {\mathcal Q},n \in {\mathcal N},t \in {\mathcal T},
		\end{split}
		\\
		\begin{split}
			\label{eq:constBeamMaxSubs}
			&\sum\limits_{n \in {\mathcal N}} {\sum\limits_{k \in {\mathcal K}} {a_{q,n}^tb_{k,n}^t} } \le K,\forall q \in {{\mathcal Q}},t \in {{\mathcal T}},
		\end{split}
		\\
		\begin{split}
			\label{eq:constBeamCoordSelection}
			&\sum\limits_{c \in {{\mathcal C}}} {e_{c,q}^t} \le {1},\forall q \in {\mathcal Q},t \in {\mathcal T},
		\end{split}
		\\
		\begin{split}
			\label{eq:constCoordBEselected}
			&\sum\limits_{q \in {{\mathcal Q}}} {e_{c,q}^t} \le 1,\forall c \in {\mathcal C},t \in {\mathcal T},
		\end{split}
		\\
		\begin{split}
			\label{eq:constMaxPowerLmtBeam}
			&{p_{m,q}^t} \le P_q^{\max },\forall q \in {{\mathcal Q}},t \in {{\mathcal T}},
		\end{split}
		\\
		\begin{split}
			\label{eq:constMaxPowerLmt}
			&\sum\limits_{q \in {{{\mathcal L}}_m}} {p_{m,q}^t} \le P_m^{\max },\forall m \in {{\mathcal M}},t \in {{\mathcal T}},
		\end{split}
		\\
		\begin{split}
			\label{eq:constMinPower}
			&0 \le p_q^t, \forall q \in {{\mathcal Q}}, t \in {{\mathcal T}},
		\end{split}
		\\
		\begin{split}
			\label{eq:constMinEleVAngle}
			& \theta _{q,n,{\rm{ele}}}^t \ge a_{q,n}^t{\theta _0}, \forall q \in {{\mathcal Q}},n \in {{\mathcal N}},t \in {{\mathcal T}},
		\end{split}
		\\
		\begin{split}
			\label{eq:constMinSINR}
			\gamma _{q,c,k,n}^t \ge b_{k,n}^t{\gamma _0}, \forall q \in {{\mathcal Q}},k \in {{\mathcal K}},n \in {{\mathcal N}},t \in {{\mathcal T}},
		\end{split}
		\\
		\begin{split}
			\label{eq:constBinaryB}
			&b_{k,n}^t \in \{ 0, 1\}, \forall k \in {{\mathcal K}}, n \in {\mathcal N}, t \in {{\mathcal T}},
		\end{split}
		\\
		\begin{split}
			\label{eq:constBinaryE}
			&e_{c,q}^t \in \{ 0, 1\}, \forall c \in {{\mathcal C}}, q \in {\mathcal Q}, t \in {{\mathcal T}}.
		\end{split}
	\end{align}
\end{subequations}
where $P_q^{\max }$ and $P_m^{\max}$ denote the transmission power limitation of beam $q$ and satellite $m$, respectively.
$\theta _{q,n,{\rm{ele}}}^t = {\arccos ( {\frac{{{{\left( {d_{q,n}^t} \right)}^2} + {R^2} - {{\left( {H_q + R} \right)}^2}}}{{2d_{q,n}^tR}}} ) - \frac{\pi }{2}}$ denotes the elevation angle between the satellite to which beam $q$ belongs and user terminal $n$ at time slot $t$ and ${\theta _0}$ represents the minimum required elevation angle.
${\gamma _0}$ is the minimum required SINR to satisfy basic communication demands.
The orthogonal subchannel assignment in each beam is indicated by constraint \eqref{eq:constSubcAssi}.
Constraint \eqref{eq:constSubcAssiMax} limits the maximum number of subchannels, i.e., $K^{thr}$, that can be assigned to each user terminal.
The number of total subchannels assigned to users in each beam is limited by constraint \eqref{eq:constBeamMaxSubs}.
Constraint \eqref{eq:constBeamCoordSelection} implies that each beam can be configured with at most one beam center at one time slot and constraint \eqref{eq:constCoordBEselected} means that each beam center coordinate can be configured to only at most one beam at one time slot to avoid severe inter-beam interference.
Taking into account the on-board power limitations, similar to \cite{choi2005optimumpower}, we impose constraints on the maximum transmission power for each beam and satellite, which are indicated by \eqref{eq:constMaxPowerLmtBeam} and \eqref{eq:constMaxPowerLmt}, respectively.
The requirements on minimum elevation angle and user SINR are ensured by constraint \eqref{eq:constMinEleVAngle} and constraint \eqref{eq:constMinSINR}, respectively.
Constraint \eqref{eq:constBinaryB} and \eqref{eq:constBinaryE} indicate that $\mathbf{b}$ and $\mathbf{e}$ are binary variables.

After elaborating concerned resource management problem, its NP hardness is first proved in the following lemma.

\begin{lemma}
	\label{lemma:NPhard}
	Problem \eqref{eq:primalProblem} is NP-hard.
\end{lemma}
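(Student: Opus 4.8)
The plan is to establish NP-hardness by \emph{restriction}: I will exhibit a special case of Problem~\eqref{eq:primalProblem} that coincides with a well-known NP-hard problem, so that any polynomial-time algorithm for \eqref{eq:primalProblem} would solve that problem as well. Concretely, I would collapse the temporal dimension to a single slot ($T=1$), freeze the beam-direction variables $\mathbf{e}$ at an arbitrary feasible assignment satisfying \eqref{eq:constBeamCoordSelection}--\eqref{eq:constCoordBEselected}, and fix every beam power $p_{m,q}^t$ to its cap $P_q^{\max}$. With $\mathbf{e}$ and $\mathbf{p}$ thereby eliminated, the only remaining decision is the subchannel assignment $\mathbf{b}$, and the objective reduces to maximizing $\sum_{n} U^{\alpha}(\sum_{q,k} a_{q,n}^t b_{k,n}^t R_{q,c,k,n}^t)$ subject to the packing constraints \eqref{eq:constSubcAssi}--\eqref{eq:constBeamMaxSubs} together with the SINR feasibility constraint \eqref{eq:constMinSINR}. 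Isolating this combinatorial core is the natural first move, since the continuous power variable and the concave $\alpha$-utility are not where the hardness resides.

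The key observation is that the inter-beam interference term \eqref{eq:subchannelInterInf} couples the subchannel choices across beams: two users in different beams assigned the same subchannel $k$ degrade each other's SINR. I would therefore build a conflict graph $G=(V,E)$ whose vertices are candidate (user, subchannel) activations and whose edges join any two activations that cannot coexist---either because they share a subchannel inside one beam, violating \eqref{eq:constSubcAssi}, or because their mutual interference pushes $\gamma_{q,c,k,n}^t$ below the threshold $\gamma_0$, violating \eqref{eq:constMinSINR}. Under this construction a feasible $\mathbf{b}$ is exactly an independent set of $G$, and maximizing the monotone utility of served users becomes a maximum-weight independent set problem, which is NP-hard; equivalently, one may cast the reduction as a graph $K$-colorability / channel-assignment instance, using the $K$ subchannels as the $K$ colors. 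Selecting the channel parameters so that interfering pairs fall just below $\gamma_0$ while non-interfering pairs stay above it lets an arbitrary conflict pattern be encoded, which completes the map from the combinatorial instance to a restricted instance of \eqref{eq:primalProblem}.

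Since the restricted instance admits a high-utility solution if and only if the corresponding independent-set / coloring instance is solvable, and the transformation is polynomial in the numbers of users, beams, and subchannels, Lemma~\ref{lemma:NPhard} would follow. The step I expect to be the main obstacle is \emph{realizability}: the channel gains in \eqref{eq:channelGain}--\eqref{eq:channelFadingGain} are not free parameters but are pinned down by the geometry---the off-boresight angle \eqref{eq:offBeamAngle}, the distances \eqref{eq:distBeamCent}, and the antenna pattern \eqref{eq:channelTransGain}---so I must argue that the required interference pattern can actually be induced, either by treating the gains as given inputs to the optimization instance (the standard convention for such hardness proofs) or by explicitly placing users and fixing beam centers so that the desired conflict graph is realized. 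A secondary technical point is dispatching the concave $\alpha$-utility and the continuous power cleanly; restricting to the cardinality objective (maximizing the number of served users) while saturating the power budget, as above, strips away the nonconvexity so that the purely combinatorial hardness is laid bare.
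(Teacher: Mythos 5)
Your overall strategy---restriction to a special case that isolates an NP-hard combinatorial core---matches the paper's, but the core you choose is different, and that choice creates a genuine gap. The paper restricts far more aggressively: one time slot, one satellite, a \emph{single} beam with a fixed center, power saturated at $P_q^{\max}$, and $\theta_0=\gamma_0=0$, so that inter-beam interference and SINR feasibility vanish entirely and what remains is the pure subchannel-to-user assignment problem \eqref{eq:NPproof} under \eqref{eq:constSubcAssi}--\eqref{eq:constBeamMaxSubs}, which the authors identify as a generalized assignment problem and cite as NP-hard. You instead keep multiple beams and locate the hardness in the interference structure, reducing from maximum-weight independent set (equivalently, a coloring/channel-assignment instance) on a conflict graph whose edges are pairs of (user, subchannel) activations that push $\gamma_{q,c,k,n}^t$ below $\gamma_0$.

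The gap is exactly the realizability issue you flag but do not resolve. In Problem \eqref{eq:primalProblem} the gains $h_{q,c,k,n}^t$ are not free inputs: they are pinned down by user positions, beam centers, and the fixed antenna pattern through \eqref{eq:channelTransGain}--\eqref{eq:channelFadingGain}, so a reduction from independent set must exhibit an explicit placement of users and candidate beam centers that realizes the target conflict graph. Arbitrary graphs cannot in general be realized as geometric interference graphs of this form (they resemble disk-intersection structures), so you would need either to target a graph class for which independent set remains NP-hard \emph{and} which is provably realizable by this geometry (e.g., unit-disk-type graphs, with the realizability argument spelled out), or to redefine the problem so that the gains are arbitrary inputs---which changes the statement being proven. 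As written, your argument is a plan rather than a proof. The paper's single-beam restriction is worth internalizing precisely because it sidesteps all of this: no geometry needs to be encoded, and the hardness is carried entirely by the combinatorial assignment in which the concave $\alpha$-utility couples each user's subchannels together.
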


\begin{proof}
	To prove {{Lemma \ref{lemma:NPhard}}}, we consider a special case of the primal problem \eqref{eq:primalProblem}.
	Specifically, we let $|{\mathcal T}| = 1$ and assume there is only one satellite in ${\mathcal M}$.
	Meanwhile, the satellite configures with only one beam $q$, and the beam is with a fixed beam center position $c$.
	As for transmission power constraints \eqref{eq:constMaxPowerLmtBeam} and \eqref{eq:constMaxPowerLmt}, by letting $p_{m,q} = P_q^{\max}$, these two constraints always hold.
	In addition, by setting $\theta _0 = 0$ and $\gamma _0 = 0$, the minimum requirements for elevation angle and user SINR are removed.
	In this special case, the primal problem \eqref{eq:primalProblem} can now be simplified as follows.
	\begin{equation}
		\label{eq:NPproof}
		\begin{aligned}
			\mathop {\max }\limits_{\boldsymbol{\rm{b}}}
			 & \quad \sum_{n\in \mathcal{N}}{U^{\alpha}( \sum_{t\in \mathcal{T}}{\sum_{q\in \mathcal{Q}}{R_{q,n}^{t}}})} \\
			\emph{s.t.} \quad
			 & \sum\limits_{n \in {\mathcal N}} {b_{k,n}} \le 1, \forall k \in {\mathcal K},                             \\
			 & \sum\limits_{n \in {\mathcal N}} {\sum\limits_{k \in {\mathcal K}} {b_{k,n}} } \le K,                     \\
			 & {b_{k,n}} \in \{0,1\}, \forall k \in {\mathcal K}, n \in {\mathcal N}.
		\end{aligned}
	\end{equation}

	It is obvious that the subchannel assignment optimization problem formulated in \eqref{eq:NPproof} is a generalized assignment problem, which has been proven to be NP-hard \cite{ozbakir2010bees}.
	Therefore, the primal problem in \eqref{eq:primalProblem} is also NP-hard.
\end{proof}

\begin{figure}[h]
	\centering
	\includegraphics[width=0.48\textwidth]{./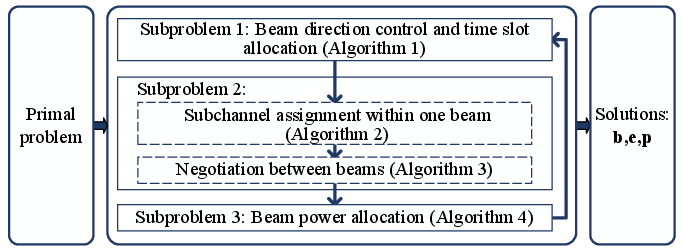}
	\caption{Overview of the proposed resource management framework.}
	\label{fig:des_solutionDesign}
	\vspace{-0.2in}
\end{figure}

\section{Resource Management Framework and Algorithm Design}
\label{sec:solutionDesign}

\subsection{Resource Management Problem Decomposition}

The primal problem \eqref{eq:primalProblem} is a mixed integer non-convex programming problem due to the integer variables $\mathbf{b,e}$ and continuous variable $\mathbf{p}$, as well as the non-convex functions in \eqref{eq:ppOJ}, \eqref{eq:constMinEleVAngle}, and \eqref{eq:constMinSINR}.
In Section \ref{sec:problemForm}, we have proved that problem \eqref{eq:primalProblem} is NP-hard, and it is challenging to find a global optimal solution directly.
To make the problem tractable, problem \eqref{eq:primalProblem} is decoupled into three subproblems, namely beam direction control and time slot allocation, user subchannel assignment, and beam power allocation.
Then, we solve the three subproblems sequentially and iteratively until convergence.
The relation among these subproblems is shown in Fig. \ref{fig:des_solutionDesign}.
In terms of algorithm design, the beam direction control and time slot allocation subproblem as well as user subchannel assignment subproblem will be both solved using matching theory, while beam power allocation subproblem will be handled with successive convex approximation.

\subsection{Problem Formulation and Algorithm Design for Beam Direction Control}

\subsubsection{Problem Formulation}

As stated above, for each beam, its center position is selected from a pre-defined coordinate set and the bore-sight direction of beams can change between time slots \cite{chahat2020cubesatantenna}.
Given user subchannel assignment and beam power allocation schemes, which map variable ${\boldsymbol{\rm{e}}}$ to ${\boldsymbol{\rm{b}}}$ and ${\boldsymbol{\rm{p}}}$, the beam direction control optimization problem across $T$ time slots is expressed as
\begin{equation}
	\label{eq:beamSelectionPart}
	\begin{aligned}
		 & \mathop {\max }\limits_{\boldsymbol{\rm{e}}} \quad
		\sum_{n\in \mathcal{N}}{U^{\alpha}( \sum_{t\in \mathcal{T}}{\sum_{q\in \mathcal{Q}}{R_{q,n}^{t}}})}                                                     \\
		 & \quad \emph{s.t.} \quad  \eqref{eq:constBeamCoordSelection}, \eqref{eq:constCoordBEselected}, \eqref{eq:constMinEleVAngle}, \eqref{eq:constBinaryE}.
	\end{aligned}
\end{equation}

\subsubsection{Reformulation as A Matching Problem}

Note that problem \eqref{eq:beamSelectionPart} is still NP-hard.
To solve this problem with low complexity, we transform it into a two sided many-to-one matching problem \cite{gu2015matchingtheory}, which is defined as follows.
\begin{definition}
	\label{def:BPmatching}
	Define ${\mathcal B}_{BDC} = {\mathcal C} \times {\mathcal T}$ with ${\mathcal C}$ and ${\mathcal T}$ being the set of beam center position candidates and the set of time slots, respectively, and each element $(c,t) \in {\mathcal B}_{BDC}$ represents a unit composed of beam center coordinate $c$ and time slot $t$.
	Then, problem \eqref{eq:beamSelectionPart} can seen as a many-to-one matching between beams whose set is ${\mathcal Q}$ and the elements in set ${\mathcal B}_{BDC}$.
	The matching problem is formally defined as a function $\Phi _{BDC}$, which has the following properties:
	\begin{enumerate}[label=(\arabic*), leftmargin=35pt]
		\item $| {\Phi _{BDC} ((c,t))} | \le 1$ and ${\Phi _{BDC} ((c,t))} \in {\mathcal Q} \cup \emptyset $,
		\item $\forall t^\prime\in \mathcal{T} ,| \{ (c,t)|( c,t ) \in {\Phi _{BDC} (q)},t=t^\prime \} |\le 1$,
		\item ${\Phi _{BDC} ((c,t))} = q$, if and only if $(c,t) \in {\Phi _{BDC} (q)}$.
	\end{enumerate}
\end{definition}
Property (1) corresponds to \eqref{eq:constCoordBEselected} and indicates that each candidate coordinate can be matched with at most one beam during one time slot, property (2) corresponds to \eqref{eq:constBeamCoordSelection} and means that each beam can be configured with at most one center position in one time slot, and property (3) indicates that coordinate-time-slot unit $(c,t)$ matches with beam $q$ if and only if beam $q$ also matches with coordinate-time-slot unit $(c,t)$.

The utility functions of coordinate-time-slot units and beams, which capture the preferences of matching players, should align with the optimization objective in problem \eqref{eq:beamSelectionPart}.
Hence, the utility of unit $(c,t)$ when matching with beam $q$, denoted as $\varphi _{BDC}^{c,t}(\mathcal{N} _{BDC}^{q,c,t})$, is given by
\begin{equation}
	\label{eq:bpUF4coordTime}
	\varphi _{BDC}^{c,t}(\mathcal{N} _{BDC}^{q,c,t})=\sum\limits_{n\in \mathcal{N} _{BDC}^{q,c,t}}{U^{\alpha}({{R}_{q,c,n}^{t}})},
\end{equation}
where $R_{q,c,n}^{t} = \sum\limits_{k \in {{\mathcal K}}} {a_{q,n}^tb_{k,n}^t}R_{q,c,k,n}^t$.
$\mathcal{N} _{BDC}^{q,c,t}$ represents the set of user terminals accessing beam $q$ with the beam center position $c$ at time slot $t$, and it is initialized with the user terminals located within a circular area centered at the beam center position $c$ of beam $q$ with a radius of $r_0$.
This set is updated after optimizing the beam direction control.
For beam $q$, its utility function is designed as follows:
\begin{equation}
	\label{eq:bpUF4beamCoord}
	\varphi _{BDC}^{q}(\mathcal{B} _{BDC}^{q})= \sum\limits_{n\in \mathcal{N}} U^{\alpha}({\sum\limits_{(c,t) \in \mathcal{B} _{BDC}^{q}}{R_{q,c,n}^{t}}}),
\end{equation}
where $\mathcal{B} _{BDC}^{q}= \Phi _{BDC}\left( q \right) $ is the set of coordinate-time-slot units matching with beam $q$.

\begin{algorithm}[t]
	\caption{Beam Direction Control Matching Algorithm}
	\label{alg:MABP}
	\begin{algorithmic}[1]
		\REQUIRE{\bf{1: \emph{Initialization}}}
		\STATE Construct preference lists of the beams $\varOmega _{BDC}^{q}, \forall q \in {\mathcal{Q}}$, and the units $\varOmega _{BDC}^{c,t}, \forall (c,t) \in {\mathcal{B}_{BDC}}$;
		\STATE Construct the set of the coordinate-time-slot units that are not matched ${\mathcal{B}_{BDC}^0}$;
		\STATE Set the index of iteration $r=0$, the units sets of accepted by beams $\mathcal{B}_{BDC}^{q,0} = \emptyset, \forall q \in {\mathcal{Q}}$;
		\WHILE{${\mathcal{B}_{BDC}^0} \ne \emptyset$ and $\exists (c,t) \in {\mathcal{B}_{BDC}^0}, \varOmega _{BDC}^{c,t} \ne \emptyset$}
		\STATE $r= r+1$;
		\FOR{$\forall (c,t) \in {\mathcal{B}_{BDC}^0}$}
		\STATE Find $q=\arg \max _{q\in \varOmega _{BDC}^{c,t}}\varphi _{BDC}^{c,t}( \mathcal{N} _{BDC}^{q,c,t}) $, and propose to beam $q$;
		\ENDFOR
		\FOR{$\forall q \in {\mathcal{Q}}$}
		\STATE Denote the units who propose to beam $q$ as $\mathcal{B}_{BDC}^{q'}$, and form $\mathcal{S} =\mathcal{B} _{BDC}^{q'}\cup \mathcal{B} _{BDC}^{q,r-1}$;
		\STATE Beam $q$ keeps most preferred unit at each time slot in $\mathcal{S}$ to form $\mathcal{S}'$ according to $\mathcal{S}'=\{ ( c,t ) =\arg \max _ {( c,t ) \in \{\mathcal{S} ,t=t' \}}\varphi _{BDC}^{q}( (c,t) ) ,\forall t'\in \mathcal{T} \} $, and then accepts the first $\min \{T,|\mathcal{S}'|\}$ best ranked units in $\mathcal{S}'$ to update $\mathcal{B}_{BDC}^{q,r}$;
		\STATE Remove the matched units from ${\mathcal{B}_{BDC}^0}$, and add the rejected units to ${\mathcal{B}_{BDC}^0}$;
		\STATE Remove $q$ from the preference lists of units that have sent proposals;
		\ENDFOR
		\ENDWHILE
		\REQUIRE{\bf{2: \emph{Swap Matching}}}
		\STATE Set $s_{i,j} = 0, i,j \in \mathcal{B}_{BDC}$;
		\STATE For any unit $i \in \mathcal{B}_{BDC}$, it searches for another unit $j$ to check the existence of swap-blocking pair;
		\WHILE{there exists swap-blocking pair}
		\IF{$\left< i,j \right> $ is a swap-blocking pair and ${\scalemath{0.9}{s_{i,j} + s_{j,i} < I_1}}$}
		\STATE $\Phi _{BDC} = \Phi _{BDC}^{i,j}$ and $s_{i,j}=s_{i,j}+1$;
		\ENDIF
		\ENDWHILE
	\end{algorithmic}
\end{algorithm}

\subsubsection{Matching-based Beam Direction Control and Time Slot Allocation Algorithm}

Based on utility function \eqref{eq:bpUF4coordTime}, each coordinate-time-slot unit $(c,t)$ is able to form its own preference list $\varOmega _{BDC}^{c,t}$ with regard to beams.
Similarly, each beam $q$ is able to construct its preference list $\varOmega _{BDC}^{q}$ with regard to coordinate-time-slot units based on utility function \eqref{eq:bpUF4beamCoord}.
To indicate their preferences, a preference relation $\succ $ is introduced for both sides of matching players.
Specifically, $q \succ _{(c,t)}q'$ means that unit $(c,t)$ prefers beam $q$ over beam $q'$ if and only if $\varphi _{BDC}^{c,t}(\mathcal{N}_{BDC}^{q,c,t})>\varphi _{BDC}^{c,t}(\mathcal{N}_{BDC}^{q',c,t})$.
Similarly, $\mathcal{B} _{BDC}^{q}\succ _q {\mathcal{B}^\prime} _{BDC}^{q}$ means beam $q$ prefers the set of units ${{\mathcal{B}} _{BDC}^{q}}$ over ${\mathcal{B}^\prime} _{BDC}^{q}$ if and only if $\varphi _{BDC}^{q}(\mathcal{B} _{BDC}^{q})>\varphi _{BDC}^{q}({\mathcal{B}^\prime} _{BDC}^{q})$.

Adhering to the classical deferred-acceptance matching procedure \cite{roth1999twosidedmatching}, each coordinate-time-slot unit proposes to its top-preferred beam that has not previously rejected it, guided by individual preferences.
The beams then accept proposals from the most preferred coordinate-time-slot units and reject the others.
However, due to the existence of inter-beam co-channel interference, the utility of each beam $q$ in \eqref{eq:bpUF4beamCoord} is not only related to the individual matching state of itself but also is influenced by the matching state of other beams, and this phenomenon is called \emph{externalities} in matching theory \cite{bodine-baron2011peereffects}.
This kind of matching problem cannot be directly tackled by the traditional deferred acceptance algorithm as discussed in \cite{bodine-baron2011peereffects}.

To deal with the interdependencies between the preferences of matching players, \emph{swap matching} operation is involved and is formally defined as
\begin{equation}
	\label{eq:BPswapMatching}
	\begin{aligned}
		\Phi _{BDC}^{i,j} =
		 & \{ \Phi _{BDC} \backslash \{ (i,\Phi _{BDC} (i)),(j,\Phi _{BDC} (j))\} \} \\
		 & \cup \{ (i,\Phi _{BDC} (j)),(j,\Phi _{BDC} (i))\},
	\end{aligned}
\end{equation}
where two coordinate-time-slot units $i,j \in {\mathcal{B}}_{BDC}$ exchange their matched pairs and all other matching pairs remain unchanged.
According to the above \emph{swap matching} definition, we further introduce \emph{swap-blocking pair} concept as follows.
\begin{definition}
	\label{def:BPswapBlockPair}
	A pair of coordinate-time-slot units $\left< i,j \right> $ is a swap-blocking pair if:
	\begin{enumerate}[label=(\arabic*), leftmargin=20pt]
		\item $i$ and $j$ represent $(c_1,t_1 )$ and $( c_2,t_2)$, respectively, with $t_1 = t_2$, and $q_1=\Phi _{BDC}\left( i \right) $, $q_2=\Phi _{BDC}\left( j \right) $,
		\item $\forall s\in \{i,j,q_1,q_2\},\varphi _s(\Phi _{BDC}^{i,j})\ge \varphi _s(\Phi _{BDC})$,
		\item $\exists s\in \{i,j,q_1,q_2\}$, such that ${\varphi _s}(\Phi _{BDC}^{i,j}) > {\varphi _s}({\Phi _{BDC}})$,
		\item $\sum_{q\in \mathcal{Q} \backslash \left\{ q_1,q_2 \right\}}{\varphi _q\left( \Phi _{BDC}^{i,j} \right)}\ge \sum_{q\in \mathcal{Q} \backslash \left\{ q_1,q_2 \right\}}{\varphi _q\left( \Phi _{BDC} \right)}$,
	\end{enumerate}
	where ${\varphi _s}(\Phi _{BDC})$ denotes the utility of player $s$ under matching state $\Phi _{BDC}$.
\end{definition}
{\bf{Definition \ref{def:BPswapBlockPair}}} indicates that if two coordinate-time-slot units want to switch between two beams, the beams involved must approve the swap, and vice versa.
Property (1) indicates that the swap operation is only possibly triggered when the two units are located within the same time slot.
Property (2) states that the swap operation can not decrease the utility of any involved player.
Property (3) implies that the swap operation must increase the utility of at least one player.
Property (4) suggests that the total utility of the other beams can not decrease to avoid fluctuations in the objective.

To address the matching problem for beam direction control and time slot allocation, we propose a swap-operation based matching algorithm that includes two phases, as shown in {{Algorithm \ref{alg:MABP}}}.
In Phase 1, a deferred acceptance based procedure is implemented with utility functions that do not consider inter-beam interference to generate an initial matching state, while
Phase 2 involves swap operations to further update the matching state with utility functions incorporating inter-beam interference.

\subsubsection{Property Analysis of Algorithm \ref{alg:MABP}}
\label{sec:propertyAnalysisBP}

For the proposed {{Algorithm \ref{alg:MABP}}}, its properties in terms of convergence, stability, and complexity are analyzed.
First, algorithm convergence is proved in the following proposition.
\begin{proposition}
	\label{prop:BPconvergece}
	{Algorithm \ref{alg:MABP}} converges to a final matching $\Phi _{BDC}^*$ after a limited number of swap operations.
\end{proposition}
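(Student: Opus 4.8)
The plan is to prove convergence via a monotonicity-plus-boundedness argument, which is the standard technique for matching algorithms with externalities. First I would argue that Phase 1 terminates in finitely many iterations: it is a deferred-acceptance procedure in which every coordinate-time-slot unit proposes to beams in decreasing order of preference and is never proposed to a beam that has already rejected it (line 13 removes rejected beams from the preference lists). Since each preference list $\varOmega_{BDC}^{c,t}$ is finite with at most $|\mathcal{Q}|$ entries, the total number of proposals any unit can make is bounded, so the \textbf{while} loop on line 4 must exit after a finite number of rounds. This gives a well-defined initial matching state to feed into Phase 2.

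The core of the proof is Phase 2. Here I would define a potential function equal to the global objective evaluated at the current matching, namely $\Psi(\Phi_{BDC}) = \sum_{q \in \mathcal{Q}} \varphi_{BDC}^{q}(\mathcal{B}_{BDC}^{q})$, and show that every executed swap operation strictly increases $\Psi$. This is where Definition~\ref{def:BPswapBlockPair} does the work: by properties (2) and (3), a swap of a swap-blocking pair $\langle i,j\rangle$ does not decrease the utility of any of the four directly involved players $\{i,j,q_1,q_2\}$ and strictly increases at least one of them, while property (4) guarantees that the aggregate utility of all \emph{other} beams $\mathcal{Q}\setminus\{q_1,q_2\}$ does not decrease. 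Combining these, the sum $\Psi$ strictly increases with each accepted swap. I would make explicit how the beam-side and unit-side utilities relate, since $\varphi_{BDC}^{c,t}$ and $\varphi_{BDC}^{q}$ are defined over different index sets, and confirm that the monotone quantity is indeed the total beam utility (or equivalently the objective of \eqref{eq:beamSelectionPart}).

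With strict monotonicity established, convergence follows from boundedness of the state space. The number of distinct matchings $\Phi_{BDC}$ is finite, because each beam is matched with at most one coordinate per time slot (property (2) of Definition~\ref{def:BPmatching}) and the sets $\mathcal{C}$, $\mathcal{T}$, $\mathcal{Q}$ are all finite; consequently $\Psi$ takes values in a finite set and is bounded above. A strictly increasing sequence of values drawn from a finite set cannot revisit any state, so no matching can recur and the process cannot cycle. Therefore Phase 2 terminates after finitely many swaps at some matching $\Phi_{BDC}^*$. The counter condition $s_{i,j}+s_{j,i}<I_1$ on each pair (line 20) provides an additional hard cap on the number of swaps per pair, reinforcing termination even before the monotonicity bound is reached.

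The main obstacle I anticipate is rigorously justifying the strict increase of the \emph{global} potential from the four local conditions in Definition~\ref{def:BPswapBlockPair}, because a swap changes the inter-beam interference term $I_{\mathrm{inter}}$ in \eqref{eq:subchannelInterInf}, which couples the utilities of all beams, not just $q_1$ and $q_2$. Property (4) is precisely designed to control this coupling, so the delicate step is to verify that decomposing $\Psi$ as the contribution of $\{q_1,q_2\}$ plus the contribution of the remaining beams is valid and that the two pieces behave monotonically as claimed. I would state clearly that externalities only enter through interference and that property (4) bounds the net effect on uninvolved beams, making the decomposition legitimate; everything else is then a routine finiteness argument.
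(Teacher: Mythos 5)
Your proposal is correct and follows essentially the same route as the paper's proof: each executed swap strictly increases the total beam utility by properties (2)--(4) of Definition~\ref{def:BPswapBlockPair}, and this monotone quantity cannot increase forever, so the swap phase terminates. Your version is in fact slightly more careful than the paper's --- you justify termination via the finiteness of the matching state space (so the potential takes values in a finite set and cannot cycle) rather than merely asserting an upper bound on the utility, and you explicitly flag the one genuinely delicate point, namely that property (3) only guarantees a strict improvement for \emph{some} player in $\{i,j,q_1,q_2\}$, which requires relating the unit-side and beam-side utilities before concluding that the \emph{beam} total strictly increases; the paper glosses over this same step.
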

\begin{proof}
	\label{proof:BPconvergece}
	According to {\bf{Definition \ref{def:BPswapBlockPair}}}, for a given swap-blocking pair $\left< i,j \right>$ under matching $\Phi _{BDC}$ with $q = \Phi _{BDC}(i)$ and $q' = \Phi _{BDC}(j)$, the swap matching operation must satisfy $\varphi_{q}(\Phi _{BDC}^{i,j}) \ge \varphi_{q}(\Phi _{BDC})$, $\varphi_{q'}(\Phi _{BDC}^{i,j}) \ge \varphi_{q'}(\Phi _{BDC})$, and $\sum_{q\in \mathcal{Q} \backslash \left\{ q_1,q_2 \right\}}{\varphi _q\left( \Phi _{BDC}^{i,j} \right)}\ge \sum_{q\in \mathcal{Q} \backslash \left\{ q_1,q_2 \right\}}{\varphi _q\left( \Phi _{BDC} \right)}$.
	Therefore, when the swap matching operation is executed in Phase 2 of {{Algorithm \ref{alg:MABP}}}, the total utility of beams will strictly increase.
	On the other hand, the number of feasible swap-blocking pairs is finite due to the limited number of matched beams, and there exists an upper bound for the total utility of beams due to the limited power and spectrum resources.
	As a result, {{Algorithm \ref{alg:MABP}}} will converge to a final matching after a limited number of swap operations.
\end{proof}

As for the stability of the final matching, due to the existence of externalities, the traditional ``pairwise-stability'' \cite{roth1999twosidedmatching} concept is no longer suitable \cite{ni2021resourceallocation}.
Therefore, we focus on the \emph{two-sided exchange-stable} concept for swap matching proposed in \cite{bodine-baron2011peereffects}, which is defined as follows.
\begin{definition}
	\label{def:BPstability}
	A many-to-one matching $\Phi _{BDC}$ is {{two-sided exchange-stable}} if and only if there is no swap-blocking pair \cite{bodine-baron2011peereffects}.
\end{definition}
Based on the above definition, the stability of {{Algorithm \ref{alg:MABP}}} is proved in the following proposition.

\begin{proposition}
	\label{prop:BPfinalStable}
	The final matching ${\Phi }^*_{BDC}$ obtained by {{Algorithm \ref{alg:MABP}}} is two-sided exchange-stable.
\end{proposition}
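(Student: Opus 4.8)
The plan is to prove this by contradiction, exploiting the stopping rule of Phase~2 of {Algorithm \ref{alg:MABP}} together with the convergence already established in {Proposition \ref{prop:BPconvergece}}. By {Definition \ref{def:BPstability}}, a matching is two-sided exchange-stable exactly when it contains no swap-blocking pair, so it suffices to show that the output matching $\Phi_{BDC}^*$ admits no such pair. Since {Proposition \ref{prop:BPconvergece}} guarantees that the swap phase halts after finitely many swap operations, $\Phi_{BDC}^*$ is a well-defined terminal matching, and I would take it as the state at which the \emph{while} loop of Phase~2 exits.

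First I would suppose, for contradiction, that $\Phi_{BDC}^*$ is \emph{not} two-sided exchange-stable. Then by {Definition \ref{def:BPswapBlockPair}} there exists a swap-blocking pair $\left< i,j \right>$ with $i=(c_1,t_1)$, $j=(c_2,t_2)$, $t_1=t_2$, $q_1=\Phi_{BDC}^*(i)$, and $q_2=\Phi_{BDC}^*(j)$, satisfying the four defining properties. The key observation is that the loop-continuation test of Phase~2 is precisely ``there exists a swap-blocking pair''; hence the existence of $\left< i,j \right>$ at $\Phi_{BDC}^*$ would mean the loop condition is still true, so the algorithm could not have terminated at $\Phi_{BDC}^*$. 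Moreover, carrying out the swap $\Phi_{BDC}^{i,j}$ is admissible: by properties (2)--(4) of {Definition \ref{def:BPswapBlockPair}} together with the utility-monotonicity argument of {Proposition \ref{prop:BPconvergece}}, the swap does not decrease the utility of any involved player and strictly increases the aggregate beam utility, so it produces a genuinely different matching. This contradicts the assumption that $\Phi_{BDC}^*$ is the final matching, forcing the conclusion that no swap-blocking pair exists, i.e., $\Phi_{BDC}^*$ is two-sided exchange-stable.

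The hard part will be reconciling the per-pair swap cap $I_1$ in {Algorithm \ref{alg:MABP}} with the claim that termination implies the \emph{absence} of swap-blocking pairs rather than mere exhaustion of the counter $s_{i,j}+s_{j,i}$. To close this gap I would invoke the monotone-and-bounded structure already used in {Proposition \ref{prop:BPconvergece}}: each executed swap strictly raises the aggregate beam utility, which is upper-bounded by the finite power and spectrum budget, so the total number of swaps across the whole run is finite. Consequently $I_1$ can be chosen large enough that the cap never binds before convergence, and the loop can only exit because its swap-blocking-pair test fails. With this observation the contradiction in the previous paragraph is legitimate, and the stability claim follows.
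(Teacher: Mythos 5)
Your proof is correct and follows essentially the same route as the paper's: assume a swap-blocking pair exists in $\Phi_{BDC}^*$, observe that Phase~2 would then perform a further utility-improving swap, and derive a contradiction with the convergence/finality established in Proposition~\ref{prop:BPconvergece}. Your explicit treatment of the per-pair cap $I_1$ is a refinement the paper omits entirely (its proof tacitly assumes the loop exits because no swap-blocking pair remains, not because the counters $s_{i,j}+s_{j,i}$ are exhausted), though your resolution --- that $I_1$ can be ``chosen large enough'' --- is asserted rather than established and would need the finite total-swap bound made quantitative to be fully rigorous.
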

\begin{proof}
	\label{proof:BPfinalStable}
	As ${\Phi }^*_{BDC}$ represents a final matching, it follows, in accordance with {Proposition \ref{prop:BPconvergece}}, that the total utility of beams cannot be improved.
	Assuming the existence of a swap-blocking pair $\left<i,j \right>$, it follows that, according to {Definition \ref{def:BPswapBlockPair}} and Phase 2 of {Algorithm \ref{alg:MABP}}, the total utility of beams could be further improved via a swap operation.
	However, this contradicts the convergence conditions of the final matching stipulated in {Proposition \ref{prop:BPconvergece}}.
	Consequently, no swap-blocking pair is present in the final matching, rendering it two-sided exchange stable in accordance with Definition 3.
\end{proof}

\begin{proposition}
	\label{prop:BPcomplexity}
	The complexity of {{Algorithm \ref{alg:MABP}}} is upper bounded by $\mathcal{O}(CTQ^2 + I_1(CT)^2)$.
\end{proposition}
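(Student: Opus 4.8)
The plan is to bound the two phases of Algorithm~\ref{alg:MABP} separately and sum the results, writing $|\mathcal{B}_{BDC}| = CT$ for the number of coordinate-time-slot units and $Q = |\mathcal{Q}|$ for the number of beams. Throughout I would assume the per-user rates $R_{q,c,n}^t$, and hence the utilities in \eqref{eq:bpUF4coordTime} and \eqref{eq:bpUF4beamCoord}, are precomputed, so that a single utility evaluation or comparison costs $\mathcal{O}(1)$; the preprocessing cost of building the preference lists is then dominated by the two loop costs derived below.

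First I would analyze Phase~1, the deferred-acceptance procedure. The key combinatorial fact is that each unit's preference list contains at most $Q$ beams and, by the removal step, a beam is deleted from a unit's list once that unit has proposed to it; hence each unit proposes at most $Q$ times and the proposal loop runs for at most $\mathcal{O}(Q)$ rounds. Within a round, each of the at most $CT$ currently unmatched units evaluates an $\arg\max$ over its remaining list of length $\mathcal{O}(Q)$, costing $\mathcal{O}(CTQ)$, while the beam-side acceptance that retains the most preferred unit per time slot is of the same or lower order. Multiplying the $\mathcal{O}(Q)$ rounds by the $\mathcal{O}(CTQ)$ per-round cost gives the first term $\mathcal{O}(CTQ^2)$.

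Next I would analyze Phase~2, the swap-matching procedure. Each successful pass through the loop executes one swap, and the counter condition $s_{i,j}+s_{j,i} < I_1$, together with the finite-swap convergence established in Proposition~\ref{prop:BPconvergece}, caps the number of executed swaps at $\mathcal{O}(I_1)$. To locate a swap-blocking pair, the algorithm scans the unordered pairs of units, of which there are $\binom{CT}{2} = \mathcal{O}((CT)^2)$, and for each candidate verifies the four conditions of Definition~\ref{def:BPswapBlockPair}; since every utility lookup is $\mathcal{O}(1)$, one full scan costs $\mathcal{O}((CT)^2)$. Multiplying the $\mathcal{O}(I_1)$ swap iterations by the $\mathcal{O}((CT)^2)$ scan cost yields the second term $\mathcal{O}(I_1(CT)^2)$, and adding the two phases gives the claimed bound $\mathcal{O}(CTQ^2 + I_1(CT)^2)$.

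The main obstacle is pinning down the per-phase iteration counts rather than the routine per-iteration work. For Phase~1 this amounts to a clean justification of the $\mathcal{O}(Q)$ round bound from the monotone shrinkage of preference lists (equivalently, that the total number of proposals never exceeds $CTQ$). For Phase~2 the delicate point is the externalities flagged after \eqref{eq:bpUF4beamCoord}: condition~(4) of Definition~\ref{def:BPswapBlockPair} couples the utilities of all beams, so I must argue both that a blocking-pair test remains $\mathcal{O}(1)$ under precomputed rates and that $\mathcal{O}(I_1)$---rather than the a priori larger count $I_1(CT)^2$ obtained from the per-pair cap over all $(CT)^2$ pairs---is the correct bound on the number of executed swaps, which is exactly where the strict utility increase of Proposition~\ref{prop:BPconvergece} is needed.
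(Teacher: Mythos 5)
Your Phase~1 accounting is fine and lands on the same $\mathcal{O}(CTQ^2)$ term as the paper (the paper attributes it to constructing the preference lists, you attribute it to the at most $CTQ$ proposals; either way the bound holds). The problem is in Phase~2.

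You claim the total number of executed swaps is capped at $\mathcal{O}(I_1)$, and you lean on Proposition~\ref{prop:BPconvergece} to justify this. That is not what the counter gives you: $s_{i,j}$ is a \emph{per-pair} counter, so the condition $s_{i,j}+s_{j,i}<I_1$ only limits the number of swaps involving the specific pair $\left<i,j\right>$ to $I_1$. Summed over the $\mathcal{O}((CT)^2)$ pairs, the aggregate cap on executed swaps is $I_1(CT)^2$, not $I_1$. Proposition~\ref{prop:BPconvergece} does not rescue the stronger claim either --- it establishes that the total beam utility strictly increases at each swap and is bounded above, hence the number of swaps is finite, but it yields no quantitative bound and certainly nothing proportional to $I_1$ alone. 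You flag this yourself as the ``delicate point'' but then assert rather than prove it. Moreover, if you correct the swap count to $I_1(CT)^2$ while keeping your $\mathcal{O}((CT)^2)$ full-scan cost per swap, you get $\mathcal{O}(I_1(CT)^4)$, which overshoots the stated bound.

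The paper's accounting is different and avoids this: it takes the pair-examination (checking one candidate pair against Definition~\ref{def:BPswapBlockPair} and possibly swapping) as the unit of work, treats each such examination as $\mathcal{O}(1)$, and bounds the total number of examinations by $I_1\times(CT)^2$ directly from the per-pair counter. Your final expression happens to coincide with the proposition only because the erroneous $\mathcal{O}(I_1)$ swap count multiplied by your $\mathcal{O}((CT)^2)$ scan cost reproduces the product $I_1(CT)^2$; the reasoning behind it does not hold as written.
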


\begin{proof}
	\label{proof:BPcomplexity}
	As shown in {{Algorithm \ref{alg:MABP}}}, two phases are involved.
	The complexity of Phase 1 is dominated by setting up preference lists for two sided players, which is $\mathcal{O} \left( CTQ^2 \right) $.
	For Phase 2, it is hard to derive the closed form of the number of swap operations executed when the algorithm converges.
	Since parameter $s_{i,j}$ is defined to void excessive swap operations between players, the number of iterations in Phase 2 is up to $I_1 \times \left( CT \right) ^2$ in the worst case.
	Hence, the complexity of Phase 2 is upper bounded by $\mathcal{O}(I_1(CT)^2)$.
	Therefore, the total complexity of {{Algorithm \ref{alg:MABP}}} is $\mathcal{O}(CTQ^2 + I_1(CT)^2)$.
\end{proof}

\subsection{Problem Formulation and Algorithm Design for User Subchannel Assignment}
\subsubsection{Problem Formulation}
For the given beam direction $\boldsymbol{\rm{e}}$ and the power allocation scheme, the optimal subchannel assignment can be determined by solving the following optimization problem
\begin{equation}
	\label{eq:subchAssPart}
	\begin{aligned}
		 & \mathop {\max }\limits_{\boldsymbol{\rm{b}}} \quad
		\sum_{n\in \mathcal{N}}{U^{\alpha}( \sum_{t\in \mathcal{T}}{\sum_{q\in \mathcal{Q}}{R_{q,n}^{t}}})}                        \\
		 & \emph{s.t.} \quad  \eqref{eq:constSubcAssi}-\eqref{eq:constBeamMaxSubs},\eqref{eq:constMinSINR},\eqref{eq:constBinaryB}
	\end{aligned}
\end{equation}

\subsubsection{Reformulation as A Matching Problem}
In the following, we first design a subchannel assignment algorithm for user terminals within one beam and then extend it to multi-beam scenarios.
Similar to beam direction control and time slot allocation, we model subchannel assignment subproblem as a two sided many-to-one matching problem.

\begin{definition}
	\label{def:SAmatching}
	With the combination of the subchannels ${\mathcal K}$ and time slots ${\mathcal T}$, the subchannel-time-slot set is constructed as ${\mathcal B}_{SA} = {\mathcal K} \times {\mathcal T}$, where each element $(k,t)$ denotes a subchannel-time-slot unit.
	Then, subchannel assignment can be transformed into a many-to-one matching problem between user terminals in ${\mathcal N}$ and
	subchannel-time-slot units in ${\mathcal B}_{SA}$, which has the following properties
	\begin{enumerate}[label=(\arabic*), leftmargin=35pt]
		\item $| {\Phi _{SA} ((k,t))} | \le 1$ and ${\Phi _{SA} ((k,t))} \in {\mathcal N} \cup \emptyset $,
		\item $\forall t^{\prime}\in \mathcal{T} ,|\{(k,t)|(k,t)\in \Phi _{SA}(n),t=t^{\prime}\}|\le K^{thr}$,
		\item ${\Phi _{SA} ((k,t))} = n$, if and only if $(k,t) \in {\Phi _{SA} (n)}$.
	\end{enumerate}
\end{definition}

Property (1) corresponds to constraint \eqref{eq:constSubcAssi} and means that each subchannel within one beam can only be allocated to one user terminal.
Property (2) corresponds to constraint \eqref{eq:constSubcAssiMax} and indicates that the number of subchannels within one beam assigned to a user terminal at each time slot must not exceed $K^{thr}$.
Property (3) means that if user terminal $n$ matches with subchannel-time-slot unit $(k,t)$, and then $(k,t)$ also matches with user terminal $n$.

We define the utility functions of user terminals and subchannel-time-slot units as follows.
For user terminal $n$, its utility function is given by
\begin{equation}
	\label{eq:saUF4UE}
	\varphi _{SA}^{n}(\mathcal{B} _{SA}^{n}) = {U^{\alpha}}(\sum\limits_{(k,t)\in \mathcal{B} _{SA}^{n}}R_{q,k,n}^{t}),
\end{equation}
where $R_{q,k,n}^{t} = \sum\limits_{c \in {{\mathcal C}}} {a_{q,n}^t}{e_{c,q}^t}R_{q,c,k,n}^t$ with ${a_{q,n}^t}$ and ${e_{c,q}^t}$ being determined by the beam direction control scheme, and $\mathcal{B} _{SA}^{n}$ is the set of subchannel-time-slot units assigned to user terminal $n$.
For subchannel-time-slot unit $(k,t)$, its utility function is defined as follows
\begin{equation}
	\label{eq:saUF4Beam}
	\varphi _{SA}^{k,t}(n) = U^{\alpha}({R_{q,k,n}^t}),
\end{equation}
where ${\mathcal{N}_{SA}^{q,t}}$ is the set of user terminals associated with beam $q$ at time slot $t$.

\begin{algorithm}[t]
	\caption{Subchannel Assignment Matching Algorithm within One Beam}
	\label{alg:MA4SAoneBeam}
	\begin{algorithmic}[1]
		\STATE Construct preference lists of the units $\scalemath{0.92}{\varOmega _{SA}^{k,t}, \forall (k,t) \in {\mathcal{B}_{SA}}}$, and the users ${\varOmega _{SA}^{n}, \forall n \in {\mathcal{N}}}$;
		\STATE Construct the set of the subchannel-time-slot units that are not matched ${\mathcal{B}_{SA}^0}$;
		\STATE Set the index of iteration $r=0$, the units sets of accepted by user $n$, $\mathcal{B}_{SA}^{n,0} = \emptyset, \forall n \in {\mathcal{N}}$;
		\WHILE{${\mathcal{B}_{SA}^0} \ne \emptyset$ and $\exists (k,t) \in {\mathcal{B}_{SA}^0}, \varOmega _{SA}^{k,t} \ne \emptyset$}
		\STATE $r= r+1$;
		\FOR{$\forall (k,t)\in {\mathcal{B}_{SA}^0}$}
		\STATE Find $n=\arg \max _{n\in \varOmega _{SA}^{k,t}}\varphi _{SA}^{k,t}( n )$, and send a proposal to user $n$;
		\ENDFOR
		\FOR{$\forall n \in {\mathcal{N}}$}
		\STATE Denote the units who propose to user $n$ as $\mathcal{B}_{SA}^{n'}$, and form $\mathcal{S} =\mathcal{B} _{SA}^{n'}\cup \mathcal{B} _{SA}^{n,r-1}$;
		\STATE User $n$ keeps the first $K^{thr}$ preferred unit at each time slot in $\mathcal{S}$ to form $\mathcal{S}'$;
		\STATE User $n$ accepts the first $\min \{K^{thr}T,|\mathcal{S}'|\}$ best ranked units to update $\mathcal{B}_{SA}^{n,r}$;
		\STATE Remove the matched units from ${\mathcal{B}_{SA}^0}$, and add the rejected units to ${\mathcal{B}_{SA}^0}$;
		\STATE Remove $n$ from the preference lists of units that have sent proposals;
		\ENDFOR
		\ENDWHILE
	\end{algorithmic}
\end{algorithm}

\begin{algorithm}[t]
	\caption{Subchannel Assignment Matching Algorithm for All Beams}
	\label{alg:MA4SAbeams}
	\begin{algorithmic}[1]
		\REQUIRE{\bf{1: \emph{Initialization}}}
		\FOR{$\forall q \in \mathcal{Q}$}
		\STATE Obtain the set of subchannels assigned to user $n$ by beam $q$ at time slot $t$, $\mathcal{K}_{SA}^{n,q,t}, n \in \mathcal{N}, t \in {\mathcal{T}}$ by {{Algorithm \ref{alg:MA4SAoneBeam}}};
		\ENDFOR
		\REQUIRE{\bf{2: \emph{Negotiation}}}
		\FOR{$\forall t \in {\mathcal{T}}$}
		\STATE Set $s_{q,k}=0, \forall q \in {\mathcal{Q}}, k \in {\mathcal{K}}$;
		\WHILE{there exists interfering beam pair $\left< q,q' \right>$, and $\exists k \in \mathcal{K}_{\rm{inter}}^{q,q'}, s_{q,k} + s_{q',k} < I_2$}
		\STATE Obtain the set of interference subchannels $\mathcal{K}_{\rm{inter}}^{q,q'}$ according to {\bf{Definition} \ref{def:SAbeamPair}};
		\FOR{$\forall k \in \mathcal{K}_{\rm{inter}}^{q,q'}$ and $s_{q,k} + s_{q',k} < I_2$}
		\STATE Find $n_1=\arg \max_{n\in \mathcal{N} _{SA}^{q,t}} \varphi _{SA}^{k,t}( n,q )$, and then $n_2=\arg \max_{n\in \mathcal{N} _{SA}^{q',t}} \varphi _{SA}^{k,t}( n,q' )$;
		\STATE Find $\left( \bar{n},\bar{q} \right) =\arg\min _{i\in \left\{ \left( n_1,q \right) ,\left( n_2,q\prime \right) \right\}}\varphi _{SA}^{k,t}(i)$;
		\STATE Remove $k$ from $\mathcal{K} _{SA}^{\bar{n},\bar{q},t}$;
		\STATE Set $s_{\bar{q},k}=s_{\bar{q},k}+1$;
		\ENDFOR
		\ENDWHILE
		\ENDFOR
	\end{algorithmic}
\end{algorithm}

\subsubsection{Matching-based Subchannel Assignment Algorithm}
In this part, we present the deferred acceptance based matching algorithm for subchannel assignment within a single beam, as shown in {Algorithm \ref{alg:MA4SAoneBeam}}.
Notably, this algorithm does not consider inter-beam interference.
Upon obtaining the matching state between subchannel-time-slot units and user terminals within each beam using {Algorithm \ref{alg:MA4SAoneBeam}}, a negotiation procedure is further needed for beam pairs targeting the reused subchannels to address inter-beam interference.
As inter-beam interference arises only when a subchannel is used by multiple beams within the same time slot, we consider performing the negotiation for the matching state between subchannel-time-slot units and user terminals at each time slot individually.
For instance, consider a pair of beams interfering with each other in subchannel $k$ at time slot $t$, denoted as $\left<q_1,q_2\right>$.
If the removal of subchannel $k$ from the allocated subchannel set of one beam results in a total utility of subchannel-time-slot units improvement, both beams are inclined to accept this negotiation proposal.
The corresponding operation is called \emph{negotiation operation}, which is defined as
\begin{equation}
	\label{eq:SAremoval}
	\begin{split}
		\Phi _{SA}^{q_1,q_2,k,t}=
		& \left\{ \Phi _{SA}^{q_1,t},\Phi _{SA}^{q_2,t}\backslash \left\{ \left( k,\Phi _{SA}^{q_2,t}\left( k \right) \right) \right\} \right\} \\
		&\cup \left\{ \Phi _{SA}^{q,t},\forall q\in \mathcal{Q} \backslash \left\{ q_1,q_2 \right\} \right\},
	\end{split}
\end{equation}
where $\Phi _{SA}^{q_1,q_2,k,t}$ denotes the matching state changed from matching $\Phi _{SA}^{t}$ by removing subchannel $k$ from the set of allocated subchannels of beam $q_2$ at time slot $t$, and $\Phi _{SA}^{t} = \left\{ \Phi _{SA}^{q,t},\forall q\in \mathcal{Q} \right\} $ with $\Phi _{SA}^{q,t}$ being the matching state between subchannel-time-slot units and user terminals within beam $q$ at time slot $t$.
Moreover, we define $s_{q,k}$ to count the negotiation times for subchannel $k$ of beam $q$ at time slot $t$ to avoid excessive negotiation operations.
To identify which beams need to perform the \emph{negotiation operation}, we define \emph{interfering beam pair} as follows.
\begin{definition}
	\label{def:SAbeamPair}
	A pair of beams $\left< q_1,q_2 \right>$ is an interfering beam pair at time slot $t$ if they satisfy
	\begin{enumerate}[label=(\arabic*), leftmargin=35pt]
		\item ${\exists {n} \in {\mathcal{N}_{SA}^{q_1,t}}, \theta _{q_2,{n},ele}^{t} \ge \theta_0}$, and
		\item ${\exists k\in \mathcal{K} _{SA}^{{n},q_1,t}, k\notin \mathcal{K} _{\rm{unused}}^{q_2,t}, h_{q_2,{n},k}^{t}p_{q_2,k}^{t}\ge I_{\mathrm{inter}}^{0}}$, and
		\item ${\scalemath{0.9}{ \sum\limits_{q\in \mathcal{Q}}{\sum\limits_{k\in \mathcal{K} ^{q,t}}{\varphi _{SA}^{k,t}\left( \Phi _{SA}^{q_1,q_2,k,t} \right)}}>\sum\limits_{q\in \mathcal{Q}}{\sum\limits_{k\in \mathcal{K} ^{q,t}}{\varphi _{SA}^{k,t}\left( \Phi _{SA}^{t} \right)}}}}$, or

		      ${\scalemath{0.9}{\sum\limits_{q\in \mathcal{Q}}{\sum\limits_{k\in \mathcal{K} ^{q,t}}{\varphi _{SA}^{k,t}\left( \Phi _{SA}^{q_2,q_1,k,t} \right)}}>\sum\limits_{q\in \mathcal{Q}}{\sum\limits_{k\in \mathcal{K} ^{q,t}}{\varphi _{SA}^{k,t}\left( \Phi _{SA}^{t} \right)}}}}$,
	\end{enumerate}
	where $\mathcal{K}_{SA}^{n,q,t}$ is the set of subchannels assigned to user $n$ by beam $q$ at time slot $t$, ${\mathcal{K}_{\rm{unused}}^{q,t}}$ denotes the set of unused subchannels to any user by beam $q$ at time slot $t$, and ${\mathcal{K}^{q,t}} = {\mathcal{K}} \backslash {\mathcal{K}_{\rm{unused}}^{q,t}}$ represents the subchannels allocated to user terminals within beam $q$.
\end{definition}
Based on {Definition \ref{def:SAbeamPair}}, the negotiation process is designed in {{Algorithm \ref{alg:MA4SAbeams}}}.

\subsubsection{Property Analysis of Algorithm \ref{alg:MA4SAbeams}}
Based on Definition \ref{def:SAbeamPair}, the convergence and stability proofs for Algorithm \ref{alg:MA4SAbeams} can be referred to as the corresponding proofs of Proposition \ref{prop:BPconvergece} and Proposition \ref{prop:BPfinalStable} for Algorithm \ref{alg:MABP}, which are omitted here for brevity.
\begin{proposition}
	\label{prop:complexitySAMA}
	The complexity of {{Algorithm \ref{alg:MA4SAbeams}}} is upper bounded by $\mathcal{O}(QKTN^2+{TKQ(Q-1)}I_2)$.
\end{proposition}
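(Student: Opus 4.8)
The plan is to decompose Algorithm \ref{alg:MA4SAbeams} into its two phases, bound each separately, and add the two bounds, exactly mirroring the argument used for Proposition \ref{prop:BPcomplexity}. For Phase \textbf{1} (\emph{Initialization}), I would observe that this phase merely invokes Algorithm \ref{alg:MA4SAoneBeam} once for each beam $q \in \mathcal{Q}$. Following the same reasoning as in the proof of Proposition \ref{prop:BPcomplexity}, the cost of a single run of Algorithm \ref{alg:MA4SAoneBeam} is dominated by constructing the two-sided preference lists between the subchannel-time-slot units in $\mathcal{B}_{SA}$ and the user terminals, together with the deferred-acceptance proposals. Since the number of units is $|\mathcal{B}_{SA}| = KT$ and the number of agents on the other side is $N$, this yields $\mathcal{O}(KTN^2)$ per beam, directly paralleling the $\mathcal{O}(CTQ^2)$ bound for the $CT$-unit-versus-$Q$-beam matching of Algorithm \ref{alg:MABP}. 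Executing this for all $Q$ beams then gives a Phase \textbf{1} complexity of $\mathcal{O}(QKTN^2)$.

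For Phase \textbf{2} (\emph{Negotiation}), the key observation is that the negotiation is carried out independently for each of the $T$ time slots. Within a fixed time slot, the while-loop ranges over interfering beam pairs, of which there are at most $Q(Q-1)$ ordered pairs. For each such pair, the inner for-loop iterates over the interfering subchannels $\mathcal{K}_{\rm{inter}}^{q,q'} \subseteq \mathcal{K}$, numbering at most $K$. Because the counter $s_{q,k}$ is introduced precisely to cap repeated negotiations, the termination condition $s_{q,k} + s_{q',k} < I_2$ ensures that each such triple triggers at most $I_2$ operations. Hence the worst-case cost per time slot is $\mathcal{O}(KQ(Q-1)I_2)$, and summing over all $T$ time slots gives a Phase \textbf{2} complexity of $\mathcal{O}(TKQ(Q-1)I_2)$. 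Adding the two phases yields the claimed bound $\mathcal{O}(QKTN^2 + TKQ(Q-1)I_2)$.

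The part that requires the most care is the Phase \textbf{1} analysis, namely justifying that a single execution of Algorithm \ref{alg:MA4SAoneBeam} is $\mathcal{O}(KTN^2)$. This means arguing that neither the deferred-acceptance proposal rounds nor the per-user ``keep the first $K^{thr}$ units'' re-sorting step dominates the preference-list construction cost. Rather than re-deriving the deferred-acceptance running time from scratch, the cleanest route is to invoke the structural identity with the matching analyzed in Proposition \ref{prop:BPcomplexity}: the two matchings have the same form, with the roles of (coordinate-time-slot units, beams) replaced by (subchannel-time-slot units, users), so the same preference-list-dominated bound transfers with $C \mapsto K$ and $Q \mapsto N$. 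A secondary subtlety is confirming that the counter condition $s_{q,k} + s_{q',k} < I_2$ genuinely prevents unbounded revisiting of the same beam-subchannel pair within a time slot, so that the per-time-slot cost is indeed $\mathcal{O}(KQ(Q-1)I_2)$ and not larger.
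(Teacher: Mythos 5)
Your proposal is correct and follows essentially the same route as the paper: bounding Phase 1 by $Q$ invocations of the single-beam deferred-acceptance matching at $\mathcal{O}(KTN^2)$ each, and bounding Phase 2 per time slot by the $I_2$-capped negotiations over at most $Q(Q-1)$ beam pairs and $K$ subchannels. The additional justification you give for the single-beam $\mathcal{O}(KTN^2)$ bound via the structural analogy with Proposition \ref{prop:BPcomplexity} is a reasonable elaboration of what the paper asserts more tersely.
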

\begin{proof}
	\label{proof:complexitySAMA}
	In {{Algorithm \ref{alg:MA4SAbeams}}}, Phase 1 involves $Q$ independent subchannel assignment problems for all beams, which are handled using the deferred acceptance based {Algorithm \ref{alg:MA4SAoneBeam}}.
	The complexity for a single beam is $\mathcal{O}(KTN^2)$.
	Consequently, the complexity of Phase 1 in {{Algorithm \ref{alg:MA4SAbeams}}} is $\mathcal{O}(QKTN^2)$.
	Regarding Phase 2, the number of iterations required for convergence is difficult to determine.
	However, an upper bound on the complexity of Phase 2 can be derived, as each subchannel of one beam is limited to participating in the negotiation process at most $I_2$ times, which implies that the complexity of the negotiation process among all beams at a single time slot is $\mathcal{O}({KQ(Q-1)}I_2)$.
	The complexity of Phase 2 across $T$ time slots in {{Algorithm \ref{alg:MA4SAbeams}}} is $\mathcal{O}({TKQ(Q-1)I_2})$.
	Thus, the upper bound of {{Algorithm \ref{alg:MA4SAbeams}}} complexity is $\mathcal{O}(QKTN^2+{TKQ(Q-1)}I_2)$.
\end{proof}

\subsection{Problem Formulation and Algorithm Design for Beam Power Allocation}

Given beam direction control and user subchannel assignment schemes, the beam power allocation optimization problem simplified from \eqref{eq:primalProblem} is formulated as
\begin{equation}
	\label{eq:powerAllocationPart}
	\begin{aligned}
		 & \mathop {\max }\limits_{\boldsymbol{\rm{p}}} \quad
		\sum_{n\in \mathcal{N}}{U^{\alpha}( \sum_{t\in \mathcal{T}}{\sum_{q\in \mathcal{Q}}{R_{q,n}^{t}}})}                \\
		 & \quad \emph{s.t.} \quad  \eqref{eq:constMaxPowerLmtBeam}, \eqref{eq:constMaxPowerLmt},\eqref{eq:constMinPower}.
	\end{aligned}
\end{equation}
Note that problem \eqref{eq:powerAllocationPart} is non-convex due to inter-beam interference in \eqref{eq:subchannelSINR}.
To tackle this nonconvexity, a successive convex approximation approach is employed to transform problem \eqref{eq:powerAllocationPart} into a convex one that can be solved effectively.

To apply the SCA approach, we first rearrange the objective function in \eqref{eq:powerAllocationPart} by following Jenssen's inequality \cite{boyd2004convexoptimizatio}
\begin{equation}
	\label{eq:PAobRearrange}
	\sum_{n\in \mathcal{N}}{U^{\alpha}( \sum_{t\in \mathcal{T}}{\sum_{q\in \mathcal{Q}}{R_{q,n}^{t}}})}\ge \sum_{n\in \mathcal{N}}{\sum_{t\in \mathcal{T}}{\sum_{q\in \mathcal{Q}}{U^{\alpha}\left( R_{q,n}^{t} \right)}}}.
\end{equation}
To handle the highly nonconcave rate function \eqref{eq:subchannelDataRate}, we resort to the widely used logarithmic approximation \cite{papandriopoulos2009scalelowcomplexity} to derive the following lower bound:
\begin{equation}
	\label{eq:PAlowBound}
	\log _2\left( 1+\gamma _{q,c,k,n}^{t} \right) \ge \mu _{q,c,k,n}^{t}\log _2\left( \gamma _{q,c,k,n}^{t} \right) +\upsilon _{q,c,k,n}^{t},
\end{equation}
with
\begin{subequations}
	\label{eq:logAppro}
	\begin{equation}
		\label{eq:mudefine}
		\mu  _{q,c,k,n}^{t}=\frac{\tilde{\gamma}_{q,c,k,n}^{t}}{1+\tilde{\gamma}_{q,c,k,n}^{t}}
	\end{equation}
	and
	\begin{equation}
		\begin{aligned}
			\label{eq:upsilondefine}
			\upsilon _{q,c,k,n}^{t} = & \log _2\left( 1+\tilde{\gamma}_{q,c,k,n}^{t} \right)                    \\
			                          & - \mu _{q,c,k,n}^{t}\log _2\left( \tilde{\gamma}_{q,c,k,n}^{t} \right).
		\end{aligned}
	\end{equation}
\end{subequations}
When $\tilde{\gamma}_{q,c,k,n}^{t} = \gamma _{q,c,k,n}^{t}$, the equivalence of \eqref{eq:PAlowBound} is achieved.

Based on the above lower bound approximation and let $\hat{\boldsymbol{\rm{p}}}=\ln (\boldsymbol{\rm{p}})$, we obtain the approximation of $R_{q,c,k,n}^{t}$ as
\begin{equation}
	\label{eq:PAlowBoundApprox}
	\begin{aligned}
		 & R_{q,c,k,n}^{t} \ge \tilde{R}_{q,c,k,n}^{t}\left( e^{\hat{\boldsymbol{\rm{p}}}} \right)                                                                           \\
		 & =  \sum_{k\in \mathcal{K}}{ B_{q,k}( {\mu _{q,c,k,n}^{t}}\frac{\ln( \gamma _{q,c,k,n}^{t}(e^{\hat{\boldsymbol{\rm{p}}}}))}{{\ln(2)}} +\upsilon _{q,c,k,n}^{t} )}.
	\end{aligned}
\end{equation}
As a result, we obtain the following approximated problem
\begin{equation}
	\label{eq:PAapprox}
	\begin{aligned}
		 &
		\mathop {\max }\limits_{{\boldsymbol{\rm{\hat{p}}}}} \quad
		\sum_{n\in \mathcal{N}}{\sum_{t\in \mathcal{T}}{\sum_{q\in \mathcal{Q}}{U^{\alpha}\left( \tilde{R}_{q,n}^{t}\left( e^{\hat{\boldsymbol{\rm{p}}}} \right) \right)}}}
		\\
		 & \quad \emph{s.t.} \quad  \eqref{eq:constMaxPowerLmtBeam}, \eqref{eq:constMaxPowerLmt},\eqref{eq:constMinPower}.
	\end{aligned}
\end{equation}

\begin{algorithm}[t]
	\caption{SCA based Power Allocation Iterative Algorithm}
	\label{alg:SCA4PA}
	\begin{algorithmic}[1]
		\REQUIRE{\bf{1: \emph{Initialization}}}
		\STATE $r=0$, $\boldsymbol{\rm{p}}=\boldsymbol{\rm{p}}^{(0)}$;
		\STATE Compute the objective value of problem \eqref{eq:powerAllocationPart} as $F^{(0)}$;
		\REQUIRE{\bf{2: \emph{Update}}}
		\WHILE{$F^{(r)}-F^{(r-1)}\le \varepsilon _{\rm{thr}}$}
		\STATE $r = r+1$;
		\STATE $\tilde{\gamma}_{q,c,k,n}^{t} ={\gamma}_{q,c,k,n}^{t,{(r-1)}}$;
		\STATE Update $\boldsymbol{\rm{\mu}}^{(r)}$ and $\boldsymbol{\rm{\upsilon}}^{(r)}$ by \eqref{eq:logAppro};
		\STATE Solve the convex optimization problem \eqref{eq:PAapprox} to get the power allocation solution $\boldsymbol{\rm{\hat{p}}}^{(r)}$;
		\STATE Obtain $\boldsymbol{\rm{{p}}}^{(r)}$ by $\boldsymbol{\rm{{p}}}^{(r)}=e^{\boldsymbol{\rm{\hat{p}}}^{(r)}}$;
		\STATE Compute ${\gamma}_{q,c,k,n}^{t,{(r)}}$ and ${F^{(r)}}$;
		\ENDWHILE
	\end{algorithmic}
\end{algorithm}

\begin{proposition}
	\label{prop:PAconvex}
	The approximated problem \eqref{eq:PAapprox} is a concave maximization problem.
\end{proposition}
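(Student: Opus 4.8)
The plan is to show that once the substitution $\hat{\boldsymbol{\rm{p}}}=\ln(\boldsymbol{\rm{p}})$ is made, the objective of \eqref{eq:PAapprox} is concave and the feasible set carved out by \eqref{eq:constMaxPowerLmtBeam}, \eqref{eq:constMaxPowerLmt}, and \eqref{eq:constMinPower} is convex in the new variable $\hat{\boldsymbol{\rm{p}}}$. The heart of the matter is the concavity of each approximate log-rate term $\ln(\gamma_{q,c,k,n}^t(e^{\hat{\boldsymbol{\rm{p}}}}))$ appearing in \eqref{eq:PAlowBoundApprox}; everything else then follows from standard composition and nonnegative-combination rules for concave functions.

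First I would expand $\ln(\gamma_{q,c,k,n}^t)$ via \eqref{eq:subchannelSINR}. Writing $p_{q,k}^t = e^{\hat{p}_{m,q}^t}/K$, the numerator contributes $\ln(h_{q,c,k,n}^t/K) + \hat{p}_{m,q}^t$, which is affine in $\hat{\boldsymbol{\rm{p}}}$. The denominator term is $\ln(I_{\rm{inter}} + \sigma_n^2)$, where, with the association, subchannel, and beam-center indicators held fixed, $I_{\rm{inter}} + \sigma_n^2$ is a positive combination of terms of the form $(\text{const})\,e^{\hat{p}_{m',q'}^t}$ plus the constant $\sigma_n^2$. Hence $\ln(I_{\rm{inter}} + \sigma_n^2)$ can be written as $\ln\sum_j e^{z_j}$ with each $z_j$ affine in $\hat{\boldsymbol{\rm{p}}}$ (one exponent being the constant $\ln\sigma_n^2$), i.e. a log-sum-exp function, which is convex. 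Since $\ln(\gamma_{q,c,k,n}^t)$ is an affine function minus this convex function, it is concave in $\hat{\boldsymbol{\rm{p}}}$. This is the key step and the main obstacle, because it is precisely the logarithmic change of variable that turns the otherwise intractable interference term into a convex log-sum-exp; without it the ratio in \eqref{eq:subchannelSINR} has no such structure.

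Next I would assemble the objective. In \eqref{eq:PAlowBoundApprox}, the quantities $\mu_{q,c,k,n}^t \in [0,1]$ and $\upsilon_{q,c,k,n}^t$ are constants within each SCA iteration since they depend only on the frozen $\tilde{\gamma}_{q,c,k,n}^t$, and $B_{q,k} > 0$; thus each $\tilde{R}_{q,c,k,n}^t$ is a nonnegative-weighted sum of concave log-rate terms plus a constant, hence concave. Aggregating over $c$ and $k$ with the fixed binary indicators $a_{q,n}^t, b_{k,n}^t, e_{c,q}^t \in \{0,1\}$ preserves concavity, so each $\tilde{R}_{q,n}^t(e^{\hat{\boldsymbol{\rm{p}}}})$ is concave. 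Finally, the $\alpha$-utility $U^{\alpha}$ of \eqref{eq:alphaPropUF} is concave and non-decreasing on the positive reals, since for $0\le\alpha<1$ its first derivative $x^{-\alpha}$ is positive and its second derivative $-\alpha x^{-\alpha-1}$ is nonpositive, while the $\alpha=1$ case $\log x$ is likewise concave and increasing. By the composition rule for a concave non-decreasing outer function with a concave inner function, each $U^{\alpha}(\tilde{R}_{q,n}^t)$ is concave, and the objective, a sum of such terms, is concave.

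It remains to verify convexity of the feasible set. Under $\hat{\boldsymbol{\rm{p}}}=\ln(\boldsymbol{\rm{p}})$, constraint \eqref{eq:constMaxPowerLmtBeam} becomes $\hat{p}_{m,q}^t \le \ln P_q^{\max}$, which is affine; constraint \eqref{eq:constMaxPowerLmt} becomes $\sum_{q \in {\mathcal L}_m} e^{\hat{p}_{m,q}^t} \le P_m^{\max}$, a sublevel set of a convex sum of exponentials; and the positivity constraint \eqref{eq:constMinPower} is automatically enforced by the exponential map and may be dropped. Hence the feasible region is convex. Combining the concave objective with the convex feasible set establishes that \eqref{eq:PAapprox} is a concave maximization problem, as claimed.
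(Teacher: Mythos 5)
Your proof is correct and follows essentially the same route as the paper: the logarithmic change of variables, the identification of $\ln(\gamma_{q,c,k,n}^t(e^{\hat{\boldsymbol{\rm{p}}}}))$ as an affine term minus a convex log-sum-exp, and the composition with the concave non-decreasing $\alpha$-utility. Your additional verification that the constraints \eqref{eq:constMaxPowerLmtBeam}--\eqref{eq:constMinPower} define a convex set in $\hat{\boldsymbol{\rm{p}}}$ is a point the paper leaves implicit, and is a worthwhile inclusion.
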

\begin{proof}
	\label{proof:PAconvex}
	In the right side of \eqref{eq:PAlowBoundApprox}, $\ln( \gamma _{q,c,k,n}^{t}(e^{\hat{\boldsymbol{\rm{p}}}}))$ can be rearranged as
	\begin{equation}
		\label{eq:logRearrange}
		\begin{aligned}
			 & \ln ( \gamma _{q,c,k,n}^{t}(e^{\hat{\boldsymbol{\rm{p}}}}))
			=\ln\left( h_{q,c,k,n}^{t} \right) + {\hat{p}_{q,k}^{t}}                                                                                                                  \\
			 & - \ln( {\sum_{q'\in \mathcal{Q} \backslash q} \sum_{n'\in \mathcal{N}} {{a_{q',n'}^tb_{k,n'}^te_{c',q'}^t}h_{q',c',k,n}^{t}e^{\hat{p}_{q',k}^{t}}}}+\sigma _{n}^{2} ).
		\end{aligned}
	\end{equation}
	Because the log-sum-exp function is convex \cite{boyd2004convexoptimizatio}, we can conclude that \eqref{eq:logRearrange} is a concave function and then the approximation rate function $\tilde{R}_{q,n}^{t}( e^{\hat{\boldsymbol{\rm{p}}}})$ is a concave function.
	Since $\alpha$-proportional utility function is an increasing strictly concave function for any given $\alpha$, $U^{\alpha}\left( \tilde{R}_{q,n}^{t}\left( e^{\hat{\boldsymbol{\rm{p}}}} \right) \right) $ is a concave function according to \cite{boyd2004convexoptimizatio}.
	Obviously, the objective function in \eqref{eq:PAapprox} is the sum of concave terms, which is also a concave function.
	Consequently, problem \eqref{eq:PAapprox} is a concave maximization problem.
\end{proof}

Problem \eqref{eq:PAapprox} can be directly tackled by interior point method, which is readily available in most optimization toolboxes, such as CVX.
Note that the optimal solution obtained in \eqref{eq:PAapprox} is a lower bound of the objective function of \eqref{eq:powerAllocationPart}.
By solving problem \eqref{eq:PAapprox} to obtain ${\gamma}_{q,c,k,n}^{t}$, we iteratively update $\mu  _{q,c,k,n}^{t}$ and $\upsilon _{q,c,k,n}^{t}$ according to \eqref{eq:mudefine} and \eqref{eq:upsilondefine}, respectively, and then tighten the bound in \eqref{eq:PAlowBoundApprox} to eventually solve problem \eqref{eq:powerAllocationPart}.
At the beginning of the first iteration, i.e., $r=0$, we initiate power allocation variable $\boldsymbol{\rm{p}}^{(0)}$.
In the $r$-th iteration, we set $\tilde{\gamma}_{q,c,k,n}^{t} = {\gamma}_{q,c,k,n}^{t} (\boldsymbol{\rm{p}}^{(r-1)})$, where $\boldsymbol{\rm{p}}^{(r-1)}$ is the optimal solution of problem \eqref{eq:PAapprox} at the previous iteration.
The details of the proposed SCA based power allocation algorithm are shown in {{Algorithm \ref{alg:SCA4PA}}}.

\begin{proposition}
	\label{prop:PAIAconvergence}
	The SCA based Algorithm \ref{alg:SCA4PA} will finally converge to a locally optimal solution to problem \eqref{eq:powerAllocationPart}.
\end{proposition}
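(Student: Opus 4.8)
The plan is to prove convergence via the standard \emph{successive convex approximation} (inner-approximation) argument, whose engine is three properties of the surrogate in \eqref{eq:PAapprox}. Write $H(\boldsymbol{\rm{p}})=\sum_{n\in\mathcal{N}}\sum_{t\in\mathcal{T}}\sum_{q\in\mathcal{Q}}U^{\alpha}(R_{q,n}^{t}(\boldsymbol{\rm{p}}))$ for the right-hand side of \eqref{eq:PAobRearrange}, and let $G(\boldsymbol{\rm{p}};\tilde{\boldsymbol{\rm{p}}})$ denote the surrogate objective of \eqref{eq:PAapprox} built at the linearization point $\tilde{\boldsymbol{\rm{p}}}$ through $\tilde{\gamma}_{q,c,k,n}^{t}=\gamma_{q,c,k,n}^{t}(\tilde{\boldsymbol{\rm{p}}})$. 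First I would record the three properties: (i) \emph{global lower bound}, $G(\boldsymbol{\rm{p}};\tilde{\boldsymbol{\rm{p}}})\le H(\boldsymbol{\rm{p}})$ for every feasible $\boldsymbol{\rm{p}}$, which follows from the per-rate logarithmic bound \eqref{eq:PAlowBound} together with the monotonicity of $U^{\alpha}$; (ii) \emph{tightness}, $G(\tilde{\boldsymbol{\rm{p}}};\tilde{\boldsymbol{\rm{p}}})=H(\tilde{\boldsymbol{\rm{p}}})$, because setting $\tilde{\gamma}=\gamma$ makes \eqref{eq:PAlowBound} an equality; and (iii) \emph{gradient consistency}, $\nabla G(\tilde{\boldsymbol{\rm{p}}};\tilde{\boldsymbol{\rm{p}}})=\nabla H(\tilde{\boldsymbol{\rm{p}}})$, which is precisely how $\mu_{q,c,k,n}^{t}$ and $\upsilon_{q,c,k,n}^{t}$ in \eqref{eq:mudefine}--\eqref{eq:upsilondefine} are selected, since $\mu=\tilde{\gamma}/(1+\tilde{\gamma})$ forces the derivative of the logarithmic surrogate to agree with that of $\log_2(1+\gamma)$ at $\gamma=\tilde{\gamma}$.

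Second, I would establish monotonicity of the objective sequence. At iteration $r$, \textbf{Algorithm \ref{alg:SCA4PA}} fixes $\tilde{\boldsymbol{\rm{p}}}=\boldsymbol{\rm{p}}^{(r-1)}$ and returns a global maximizer $\boldsymbol{\rm{p}}^{(r)}$ of the concave problem \eqref{eq:PAapprox} (concavity guaranteed by \textbf{Proposition \ref{prop:PAconvex}}). Chaining (i), the optimality of $\boldsymbol{\rm{p}}^{(r)}$ for the surrogate, and (ii) gives $H(\boldsymbol{\rm{p}}^{(r)})\ge G(\boldsymbol{\rm{p}}^{(r)};\boldsymbol{\rm{p}}^{(r-1)})\ge G(\boldsymbol{\rm{p}}^{(r-1)};\boldsymbol{\rm{p}}^{(r-1)})=H(\boldsymbol{\rm{p}}^{(r-1)})$, so $\{H(\boldsymbol{\rm{p}}^{(r)})\}$ is non-decreasing across iterations.

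Third, I would obtain convergence of the objective value. The power constraints \eqref{eq:constMaxPowerLmtBeam}, \eqref{eq:constMaxPowerLmt}, and \eqref{eq:constMinPower} define a compact feasible region on which $H$ is continuous and hence bounded above; a monotone, bounded sequence converges, so $\{H(\boldsymbol{\rm{p}}^{(r)})\}$ has a limit, and by Bolzano--Weierstrass $\{\boldsymbol{\rm{p}}^{(r)}\}$ admits a convergent subsequence with limit $\boldsymbol{\rm{p}}^{\ast}$. To upgrade this to local optimality, I would argue that $\boldsymbol{\rm{p}}^{\ast}$ is a KKT point of \eqref{eq:powerAllocationPart}: at the fixed point $\boldsymbol{\rm{p}}^{\ast}$ maximizes $G(\cdot;\boldsymbol{\rm{p}}^{\ast})$ and therefore satisfies the surrogate's KKT system, which by gradient consistency (iii) coincides with the KKT system of the true objective, whence $\boldsymbol{\rm{p}}^{\ast}$ is stationary for the problem.

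The hard part will be this final transfer of stationarity. It requires differentiating the log-approximated rate explicitly and verifying (iii) from \eqref{eq:mudefine}--\eqref{eq:upsilondefine}, invoking a constraint qualification (e.g., Slater, trivially met by the linear power constraints) so that the KKT conditions are necessary, and passing the stationarity relations to the limit along the convergent subsequence using continuity of the gradient map. A further point that needs care is that the monotone quantity is the post-Jensen objective $H$, a lower bound of the true objective in \eqref{eq:ppOJ} via \eqref{eq:PAobRearrange}; I would therefore phrase the conclusion as convergence to a stationary (locally optimal) point of the reformulated problem that lower-bounds \eqref{eq:powerAllocationPart}, which is the sense in which the proposition holds.
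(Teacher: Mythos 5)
Your proposal is correct and follows essentially the same route as the paper's proof: the identical monotone-improvement chain (surrogate is a global lower bound, tight at the linearization point, and maximized exactly at each iteration), followed by boundedness over the compact feasible set and identification of the limit as a KKT point --- where the paper simply invokes \cite{papandriopoulos2009scalelowcomplexity} for that last step rather than verifying gradient consistency directly as you propose. Your closing caveat about the Jensen step is a fair refinement the paper glosses over: the quantity that provably increases is the post-Jensen objective $\sum_{n}\sum_{t}\sum_{q}U^{\alpha}(R_{q,n}^{t})$, so the stationarity conclusion strictly attaches to that reformulated lower-bounding problem rather than to \eqref{eq:powerAllocationPart} itself.
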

\begin{proof}
	\label{proof:PAIAconvergence}
	In the $r-$th iteration of {{Algorithm \ref{alg:SCA4PA}}}, the following relationship can be derived:
	\begin{equation}
		\label{eq:PAiteration}
		\begin{aligned}
			 & F( \tilde{\boldsymbol{R}}( \boldsymbol{\mathrm{p}}^{( r )},\boldsymbol{\mathrm{\mu}} ^{( r )},
			\boldsymbol{\mathrm{\upsilon}} ^{( r )} ) )  =\underset{\boldsymbol{\mathrm{p}}}{\max} \,\,F( \tilde{\boldsymbol{R}}( \boldsymbol{\mathrm{p}},\boldsymbol{\mathrm{\mu}} ^{( r )},\boldsymbol{\mathrm{\upsilon}} ^{( r )} ) ) \\
			 & \overset{(a)}{\ge} F( \tilde{\boldsymbol{R}}( \boldsymbol{\mathrm{p}}^{( r-1 )},\boldsymbol{\mathrm{\mu}} ^{( r )},\boldsymbol{\mathrm{\upsilon}} ^{( r )} ) )                                                            \\
			 & \overset{(b)}{=} F( \boldsymbol{R}( \boldsymbol{\mathrm{p}}^{( r-1 )} ) )                                                                                                                                                 \\
			 & \overset{(c)}{\ge} F( \tilde{\boldsymbol{R}}( \boldsymbol{\mathrm{p}}^{( r-1 )},\boldsymbol{\mathrm{\mu}} ^{( r-1 )},\boldsymbol{\mathrm{\upsilon}} ^{( r-1 )} ) ),
		\end{aligned}
	\end{equation}
	where $F({\boldsymbol{R}})$ and $F(\tilde{\boldsymbol{R}})$ denote the objective value of problem \eqref{eq:powerAllocationPart} and problem \eqref{eq:PAapprox}, respectively.
	Specifically, $(a)$ holds due to the fact that $\boldsymbol{\mathrm{p}}^{( r )}$ is the optimal solution to problem \eqref{eq:PAapprox} under $\boldsymbol{\mathrm{\mu}} ^{( r )}$ and $\boldsymbol{\mathrm{\upsilon}} ^{( r )}$, $(b)$ holds for the reason that $\boldsymbol{\mathrm{\mu}} ^{( r )}$ and $\boldsymbol{\mathrm{\upsilon}} ^{( r )}$ are calculated from the optimal solution $\boldsymbol{\mathrm{p}}^{( r -1 )}$, and $(c)$ holds according to the definition of the lower bound approximation in \eqref{eq:PAlowBoundApprox}.
	Therefore, the approximated value $F( \tilde{\boldsymbol{R}}( \boldsymbol{\mathrm{p}}^{( r -1)},\boldsymbol{\mathrm{\mu}} ^{( r-1 )},
		\boldsymbol{\mathrm{\upsilon}} ^{( r-1 )} ) )$ increases after each iteration by utilizing the optimal solution $\boldsymbol{\mathrm{p}}^{( r-1)}$ of problem \eqref{eq:PAapprox}.
	Moreover, the feasible region of problem \eqref{eq:powerAllocationPart} is compact and the corresponding objective value is upper bounded due to the total power and spectrum resource constraints, which implies the SCA based {{Algorithm \ref{alg:SCA4PA}}} will finally converge to a solution $\boldsymbol{\mathrm{p}}^*$.
	According to \cite{papandriopoulos2009scalelowcomplexity}, solution $\boldsymbol{\mathrm{p}}^*$ is a local optimal solution that satisfies the necessary KKT optimality conditions of problem \eqref{eq:powerAllocationPart}.
\end{proof}

\begin{proposition}
	\label{prop:complexitySCA}
	The complexity of {{Algorithm \ref{alg:SCA4PA}}} is upper bounded by $\mathcal{O} (\overline{r}\left( TMQ \right) ^{3.5})$.
\end{proposition}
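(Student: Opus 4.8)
The plan is to bound the per-iteration cost of Algorithm \ref{alg:SCA4PA} and then multiply by the number of SCA iterations. First, observe that each pass through the \emph{Update} phase performs three kinds of operations: (i) recomputing the approximation coefficients $\boldsymbol{\mu}^{(r)}$ and $\boldsymbol{\upsilon}^{(r)}$ via \eqref{eq:logAppro}; (ii) solving the convex surrogate problem \eqref{eq:PAapprox}; and (iii) evaluating the SINRs $\gamma_{q,c,k,n}^{t,(r)}$ and the objective $F^{(r)}$. Steps (i) and (iii) are closed-form evaluations whose cost grows only polynomially (at most linearly in the number of beam-coordinate-subchannel-user-time tuples), so they are dominated by the convex solve in step (ii), which is therefore the term we must bound.

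Next I would count the optimization variables in problem \eqref{eq:PAapprox}. The decision vector is $\hat{\boldsymbol{\rm{p}}} = \ln(\boldsymbol{\rm{p}})$ with $\boldsymbol{\rm{p}} = \{p_{m,q}^t : m \in \mathcal{M}, q \in \mathcal{Q}, t \in \mathcal{T}\}$, so the variable dimension is on the order of $MQT$. By Proposition \ref{prop:PAconvex} this problem is a concave maximization over the compact convex feasible set carved out by the linear power constraints \eqref{eq:constMaxPowerLmtBeam}, \eqref{eq:constMaxPowerLmt}, and \eqref{eq:constMinPower}, and hence is solvable by the interior-point method, as already noted after Proposition \ref{prop:PAconvex}.

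Then I invoke the standard worst-case complexity of the interior-point method: for a convex program in $n$ variables it requires $\mathcal{O}(\sqrt{n})$ Newton iterations, and each Newton step costs $\mathcal{O}(n^{3})$ to assemble and solve the associated linear system, giving an overall per-solve cost of $\mathcal{O}(n^{3.5})$. Substituting $n = \mathcal{O}(MQT)$ yields a per-iteration cost of $\mathcal{O}((TMQ)^{3.5})$ for step (ii), which by the first paragraph is the dominant per-iteration cost of the algorithm.

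Finally, since Algorithm \ref{alg:SCA4PA} terminates after $\overline{r}$ passes through the \emph{Update} phase, where $\overline{r}$ is the number of SCA iterations needed to meet the convergence criterion $F^{(r)} - F^{(r-1)} \le \varepsilon_{\rm{thr}}$ (its finiteness following from the monotone-increase-plus-boundedness argument of Proposition \ref{prop:PAIAconvergence}), multiplying the per-iteration cost by $\overline{r}$ gives the claimed bound $\mathcal{O}(\overline{r}(TMQ)^{3.5})$. The main subtlety is the $3.5$ exponent: it is not derived from scratch but inherited from the generic interior-point complexity result, so the crux of the argument is simply to pin down the variable count $n = \mathcal{O}(MQT)$ and to verify that every remaining bookkeeping step is of strictly lower order.
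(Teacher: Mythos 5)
Your proof is correct and follows essentially the same route as the paper: identify the interior-point solve of problem \eqref{eq:PAapprox} as the dominant per-iteration cost, bound it by $\mathcal{O}((TMQ)^{3.5})$ using the variable count $\mathcal{O}(MQT)$, and multiply by the number of SCA iterations $\overline{r}$. Your version merely fills in details the paper leaves implicit (the $\sqrt{n}\cdot n^{3}$ origin of the $3.5$ exponent and the explicit check that the coefficient updates and SINR evaluations are lower order).
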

\begin{proof}
	The computational complexity of {{Algorithm \ref{alg:SCA4PA}}} is dominated by solving problem \eqref{eq:PAapprox}.
	By employing the interior point method, the computational complexity of problem \eqref{eq:PAapprox} is $\mathcal{O} (\left( TMQ \right) ^{3.5})$.
	Consequently, the overall computational complexity of {{Algorithm \ref{alg:SCA4PA}}} is expressed as $\mathcal{O} (\overline{r}\left( TMQ \right) ^{3.5})$, with $\overline{r}$ representing the number of iterations required to fulfill the convergence condition.
\end{proof}

\subsection{Summary of Our Proposed Resource Management Approach}
\label{sec:summaryAlgorithm}

The proposed resource management approach consists of four iterative algorithms.
In the first iteration, the beam power allocation and subchannel assignment strategies are initiated, followed by the execution of Algorithm \ref{alg:MABP} to obtain feasible beam center positions for each beam, as shown in Fig. \ref{fig:des_solutionDesign}.
Based on the obtained beam center positions and the initiated beam power allocation strategy, Algorithm \ref{alg:MA4SAoneBeam} is employed to obtain a feasible subchannel assignment strategy for all time slots without considering inter-beam interference. Subsequently, Algorithm \ref{alg:MA4SAbeams} is applied to coordinate subchannel allocation among beams at each time slot to mitigate inter-beam interference.
At the end of this iteration, Algorithm \ref{alg:SCA4PA} updates the beam power allocation strategy based on the obtained beam center positions and subchannel assignment strategy.
Then, the following iteration will use the beam power allocation and subchannel assignment strategies obtained from the previous iteration to start the iteration and execute {Algorithm \ref{alg:MABP}}.
The iteration of these four algorithms continues until convergence is reached.

\begin{table}[]
	\caption{Main Simulation Parameters}
	\centering
	\label{tab:simulationSettings}
	\begin{tabular}{c|c}
		\toprule[1pt]
		\textbf{Parameter}                             & \textbf{Value}           \\ \midrule[1pt]
		The altitude of satellite orbits               & $780\,km$                \\
		Orbit inclination                              & $45^\circ$               \\
		The number of orbits                           & 16                       \\
		The number of satellites per orbit             & 30                       \\
		Configuration period                           & $100\,s$                 \\
		Length of time slot $\tau$                     & $1\,s$                   \\
		The diameter of target ground area             & $500\,km$                \\
		The number of beam center candidates $C$       & 200                      \\
		The number of satellites for the area $M$      & 2                        \\
		The number of users in the area $N$            & 50                       \\
		Radius for beam service users initiation $r_0$ & $100\,km$                \\
		User receiving antenna gain $G_n^{{\rm{rx}}}$  & $39.7\,\rm{dBi}$         \\
		The diameter of satellite antenna $D$          & $0.5\,m$                 \\
		Satellite antenna aperture efficiency $\eta$   & 0.65                     \\
		$\rho_1, \rho_2, \rho_3$                       & 0.95, 0.1, 0.058         \\
		Beam bandwidth $B$                             & $400\,\rm{MHz}$          \\
		Carrier Frequency $f$                          & $20\,\rm{GHz}$ (Ka band) \\
		Noise temperature                              & $150\,\rm{K}$            \\
		Maximal beam transmission power $P_q^{max}$    & $200\,\rm{W}$            \\
		Maximal satellite power $P_m^{max}$            & $1200\,\rm{W}$           \\
		Minimum elevation angle $\theta_0$             & $25^\circ$               \\
		$I_1, I_2$                                     & 2,2                      \\
		\bottomrule[1pt]
	\end{tabular}
\end{table}

\section{Simulation Results}
\label{sec:numericalRes}
In this section, we first describe the simulation environment and present the setting of simulation parameters.
Then, simulation results are provided to demonstrate the performance of the multi-beam satellite network with the proposed joint beam direction control and resource allocation scheme.
To simulate real satellite communication constellations, a Walker constellation of LEO satellites is constructed by AGI Systems Tool Kit (STK) and used in all simulations.
The parameters of the constellation and other main simulation parameters are listed in Table \ref{tab:simulationSettings}.

In terms of the satellite constellation employed in our experiments, it consists of 16 orbits at an altitude of 780 $km$, with each orbit characterized by an inclination of 45$^\circ$ and accommodating 30 satellites.
The targeted service area, centrally located at coordinates (41.7642$^\circ$N, 86.6513$^\circ$E), spans a radius of 500 $km$ and contains $N = 50$ users.
The potential beam center positions are distributed within the area, with a total of $C=200$ candidate coordinates.
The configuration period is set as $T = 100$ seconds from 14 Oct 2022 04:02:00.000 UTCG to 14 Oct 2022 04:03:40.000 UTCG, during which $M = 2$ satellites of the constellation are able to cover the target service area simultaneously, with a time slot length of $\tau = 1$ second.
In each time slot, satellite positions are updated based on their movement trajectory, and the network topology at the 50-th time slot is illustrated in Fig. \ref{fig:des_UEdistributionRandom}.
The maximal beam transmission power and satellite power are configured at $P_q^{max} = 200 W$ and $P_m^{max} = 1200 W$, respectively.
The minimum required elevation angle is set at 25$^\circ$, consistent with the setting of Starlink \cite{pachler2021updatedcomparison}.
For downlink communication between satellites and users, the Ka band is utilized, and each beam has a maximum bandwidth of $400$ MHz.

To show the effectiveness of our proposed resource management approach for dynamic multi-beam multi-satellite downlink networks, the following two schemes are considered as baselines:
\begin{enumerate}
	\item {\bf{Baseline 1}}:
	      In this baseline scheme, all beams are configured with fixed directions that are user cluster centers, which are determined by the clustering algorithm that divides user terminals into $Q$ clusters.
	      The user subchannel assignment is handled by matching theory, and the beam power allocation is handled by SCA approach, which are the same as our proposal.
	\item {\bf{Baseline 2}}: In this baseline scheme, the transmission power of each satellite is equally distributed among all of its beams at each time slot, and both beam direction and subchannel assignment are determined by matching based algorithms, which are the same as our proposal.
\end{enumerate}

\begin{figure}[t]
	\centering
	\includegraphics[width=0.48\textwidth]{./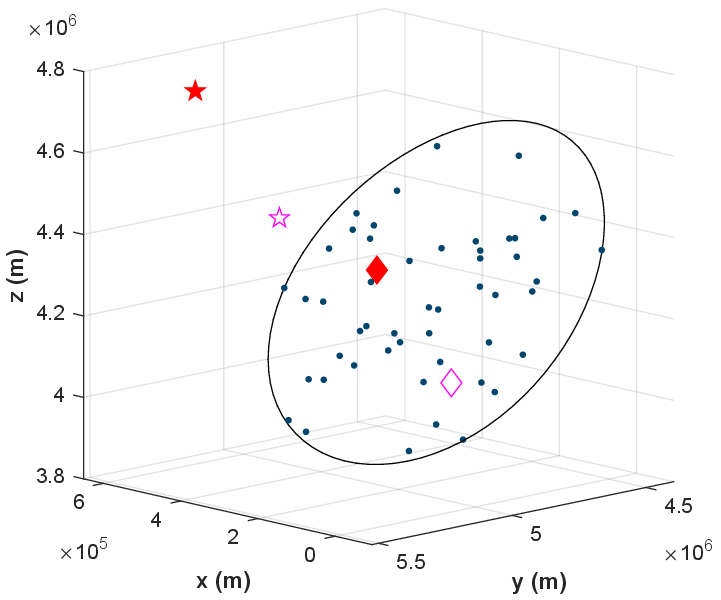}
	\caption{The network topology at the 50-$th$ time slot. The points represent user terminals, the pentagrams represent satellites, and the diamonds represent sub-satellite points.}
	\label{fig:des_UEdistributionRandom}
\end{figure}

\subsection{The Verification of Convergence}

In Fig. \ref{fig:GPJBPRAconvergence}, the evaluation of the objective in \eqref{eq:primalProblem}, i.e., the sum of user $\alpha$-utility, is demonstrated under different number of beams $L$ at each satellite, where the number of subchannels per beam $K=20$ and the maximum number of subchannels that can be allocated to a single user $K^{thr}=6$.
We can observe that our proposed resource management approach, as summarized in Section \ref{sec:summaryAlgorithm}, converges within a small number of outer iterations of Algorithm \ref{alg:MABP}.
Specifically, it converges after the first iteration when $L=1$.
When $L = 7$, it converges in less than 5 iterations.
This is because increasing $L$ means that there are more beams whose center positions and transmission power should be determined in each time slot, which leads to more iterations to converge.

\begin{figure}[t]
	\centering
	\includegraphics[width=0.48\textwidth]{./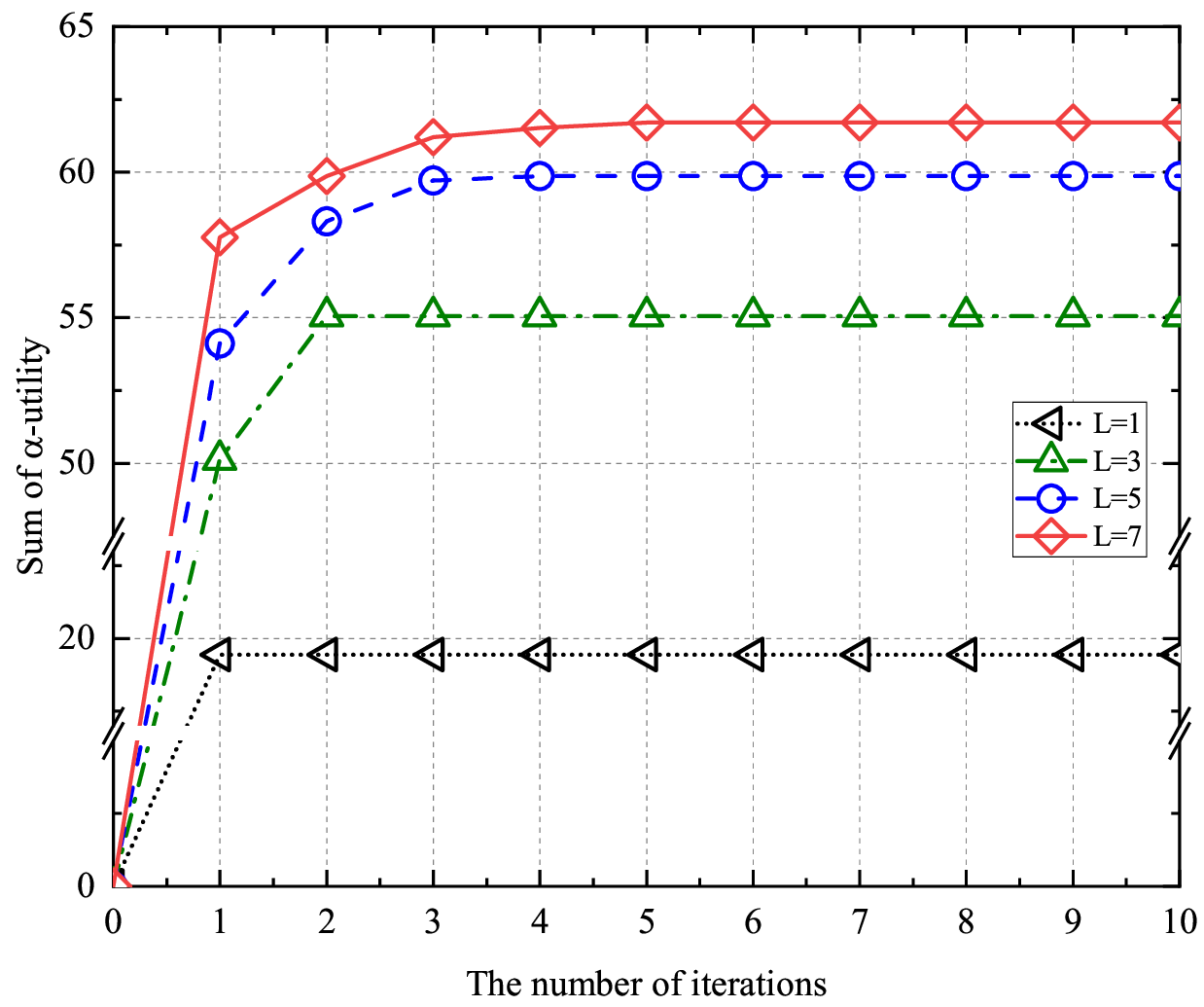}
	\caption{The convergence of our proposal with different numbers of beams configured at each satellite ($K=20, K^{thr}=6$).}
	\label{fig:GPJBPRAconvergence}
\end{figure}

\begin{figure*}[t]
	\centering
	\subfigure[]{
		\label{fig:GPnumBeams1}
		\includegraphics[width=0.31\linewidth]{./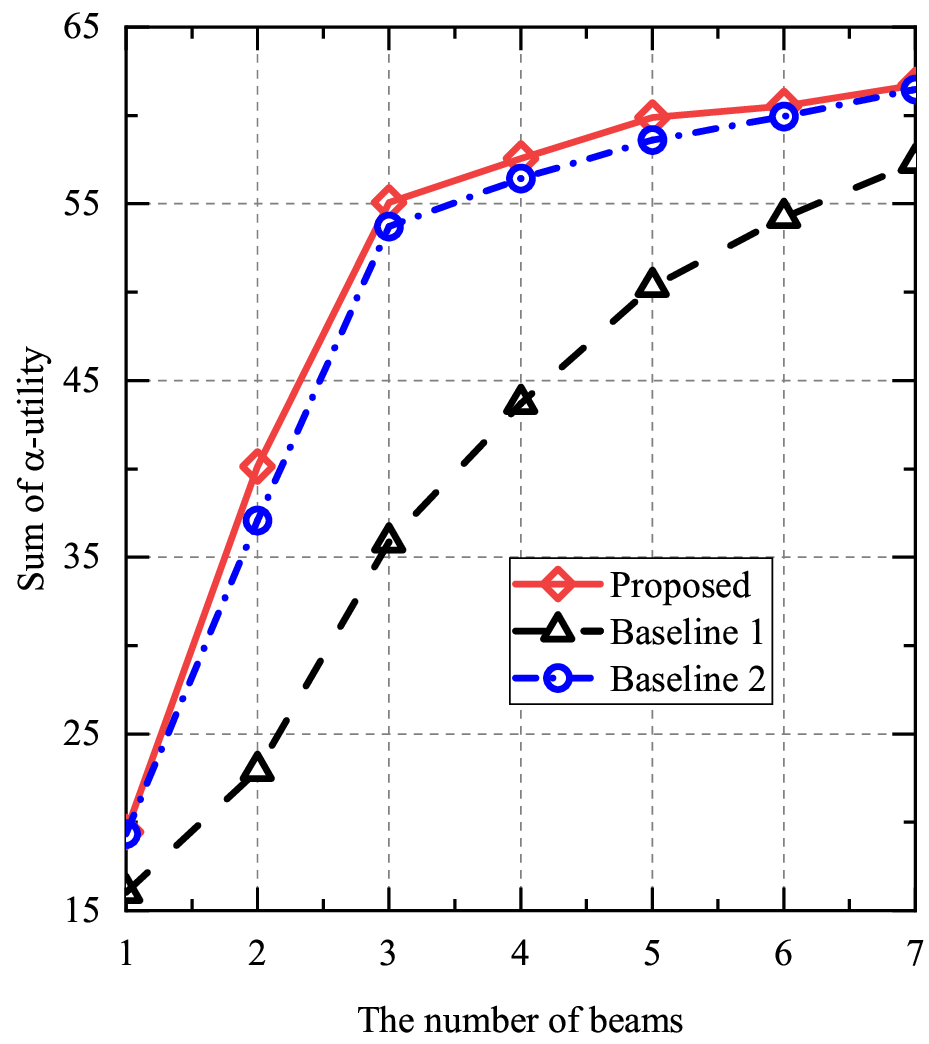}}
	\subfigure[]{
		\label{fig:GPnumBeams2}
		\includegraphics[width=0.31\linewidth]{./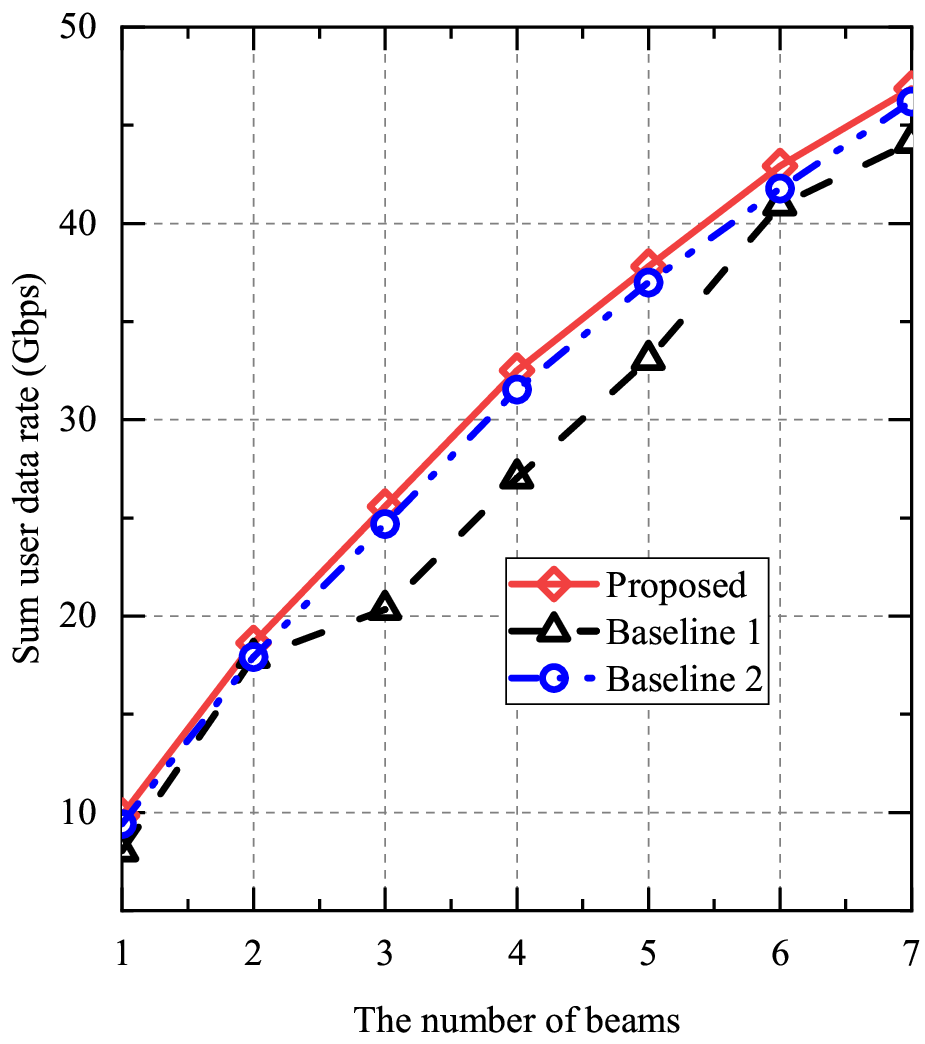}}
	\subfigure[]{
		\label{fig:GPnumBeams3}
		\includegraphics[width=0.31\linewidth]{./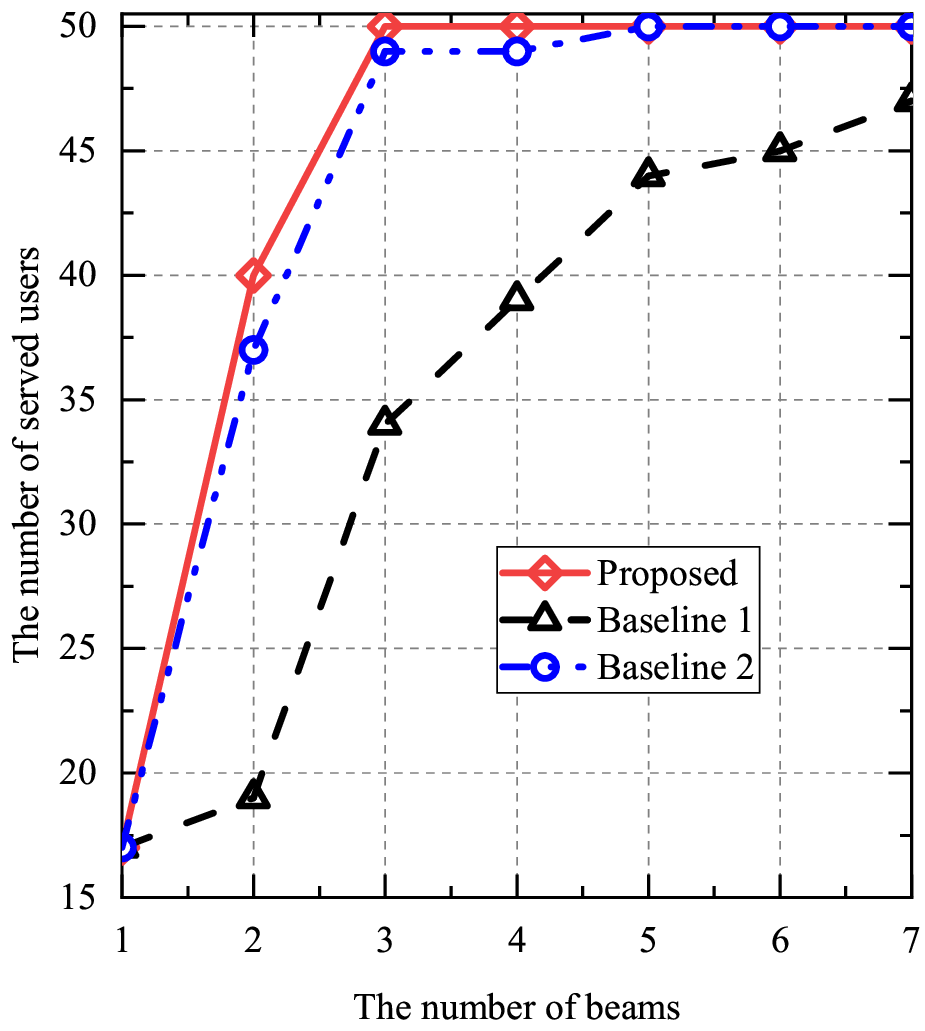}}
	\caption{Impact of the number of beams configured at each satellite on the network performance ($K=20, K^{thr}=6$). (a): The number of beams \emph{vs.} sum of $\alpha$-utility; (b) The number of beams \emph{vs.} sum user data rate; (c) The number of beams \emph{vs.} the number of served users.}
	\label{fig:GPnumBeams}
	\vspace{-0.1in}
\end{figure*}

\begin{figure}[t]
	\centering
	\includegraphics[width=0.48\textwidth]{./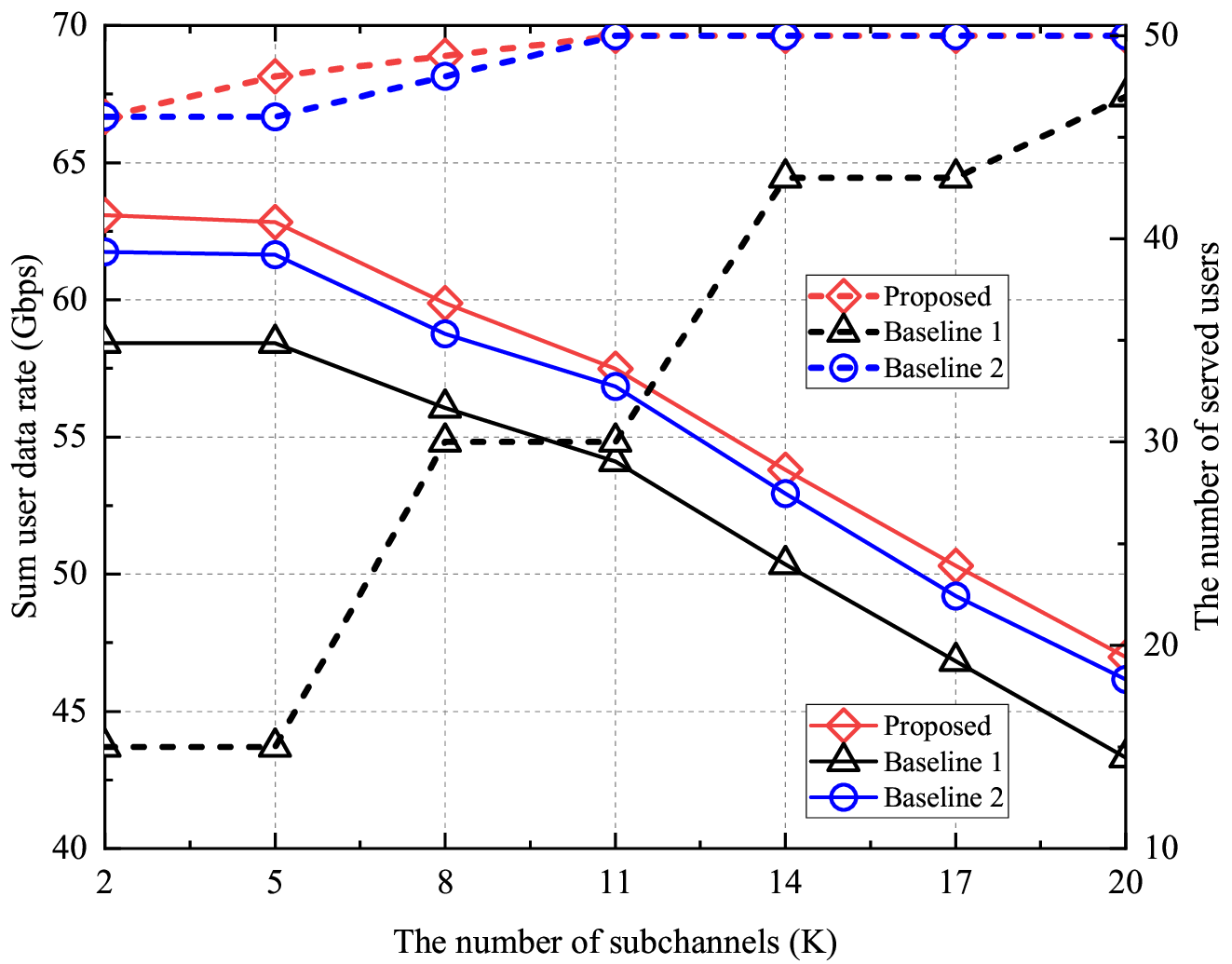}
	\caption{Impact of the number of subchannels per beam on the network performance ($L=7,K^{thr}=6$). The solid lines correspond to the sum user data rate while the dash lines correspond to the number of severed users.}
	\label{fig:GPnumberSub}
\end{figure}
\begin{figure}[t]
	\centering
	\includegraphics[width=0.48\textwidth]{./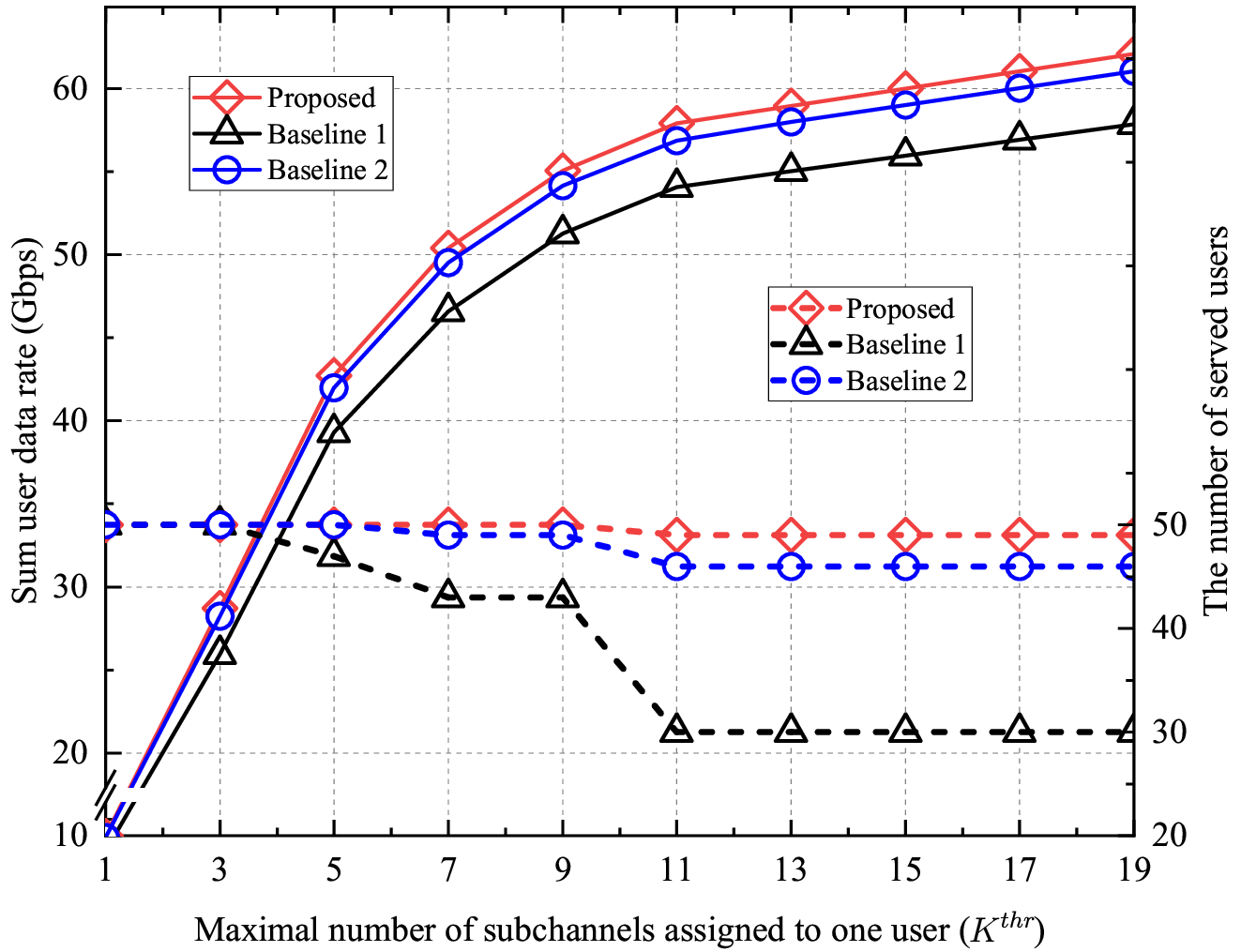}
	\caption{Impact of the maximum number of subchannels that can be allocated to a single user on the network performance ($L=7, K=20$). The solid lines correspond to the sum user data rate while the dash lines correspond to the number of severed users.}
	\label{fig:GPnumberMaxk}
\end{figure}

\subsection{The Impacts of the Number of Beams Configured at Each Satellite}

In Fig. \ref{fig:GPnumBeams}, our proposed resource management approach is compared to two baseline schemes in terms of sum $\alpha$-utility in \eqref{eq:primalProblem}, user sum data rate $\sum_{n\in \mathcal{N}}{\sum_{t\in \mathcal{T}}{\sum_{q\in \mathcal{Q}}{\frac{R_{q,n}^{t}}{T}}}}$, and the number of severed users during the configuration period under varying numbers of beams configured at each satellite.
Here, the number of subchannels per beam is $K=20$, and the maximum number of subchannels that can be allocated to a single user is $K^{thr}=6$.
It is evident that a higher number of beams at each satellite contributes to an increase in the sum of user $\alpha$-utility.
This is because the total power of the satellite used for communication, as well as the probability of a user accessing a beam with a better channel state, both increase when the number of beams configured at each satellite increases.
Moreover, our proposal outperforms the two baselines in terms of all the metrics.
This superior performance compared with Baseline 1 can be attributed to sum $\alpha$-utility being directly considered in the beam direction control of our proposal, whereas Baseline 1 determines beam centers based only on the geographic distribution of users.
In comparison to Baseline 2, our proposal with power allocation optimization can better alleviate inter-beam interference, thereby enhancing network performance.
Furthermore, due to the improved user fairness brought about by $\alpha$-utility in beam direction control, our proposal and Baseline 2 can serve nearly all users when only three beams are equipped at each satellite, i.e., $L=3$, as shown in Fig. \ref{fig:GPnumBeams3}.
Additionally, when $L$ exceeds 3, the increase in the sum of $\alpha$-utility tends to decelerate in Fig. \ref{fig:GPnumBeams1}, but the sum user data rate continues to rise gradually in Fig. \ref{fig:GPnumBeams2}.
This is because there is no significant increase in the number of served users as $L$ increases, as indicated in Fig. \ref{fig:GPnumBeams3}.
Upon further observation, it is evident that optimizing beam direction can yield a more substantial increase in the sum user data rate, up to 25\%, in comparison to the 5\% increase achieved through beam power allocation optimization.
This finding offers pivotal insights for the design of resource allocation schemes in multi-beam satellite communication systems.

\subsection{The Impacts of the Number of Subchannels per Beam}

Fig. \ref{fig:GPnumberSub} illustrates the sum user data rate and the number of served users versus the number of subchannels per beam, where the number of beams configured at each satellite is $L=7$ and the maximum number of subchannels that can be allocated to a single user is $K^{thr}=6$.
First, it is observed that the sum user data rate decreases monotonically when the number of subchannels per beam increases.
This is due to the fact that the increase in the number of subchannels per beam will decrease the available bandwidth and transmission power of each subchannel and the data rate of a single user can be decreased under the fixed maximal number of subchannels.
Second, it can be seen that the sum user data rate remains nearly unchanged when $K<K^{thr}$.
The reason is that the subchannels tend to be assigned to the users with better channel states.
Third, it is shown that the number of served users increases monotonically as the number of subchannels per beam increases.
This is attributed to the higher probability of assigning subchannels to users with less favorable channel states as the parameter $K$ increases.
Still, with the user fairness taken into account and SCA based beam power optimization, our proposal outperforms that of both baseline schemes.
Particularly, compared to Baseline 1, our proposal results in a two-fold increase in the number of served users.

\begin{figure}[t]
	\centering
	\includegraphics[width=0.48\textwidth]{./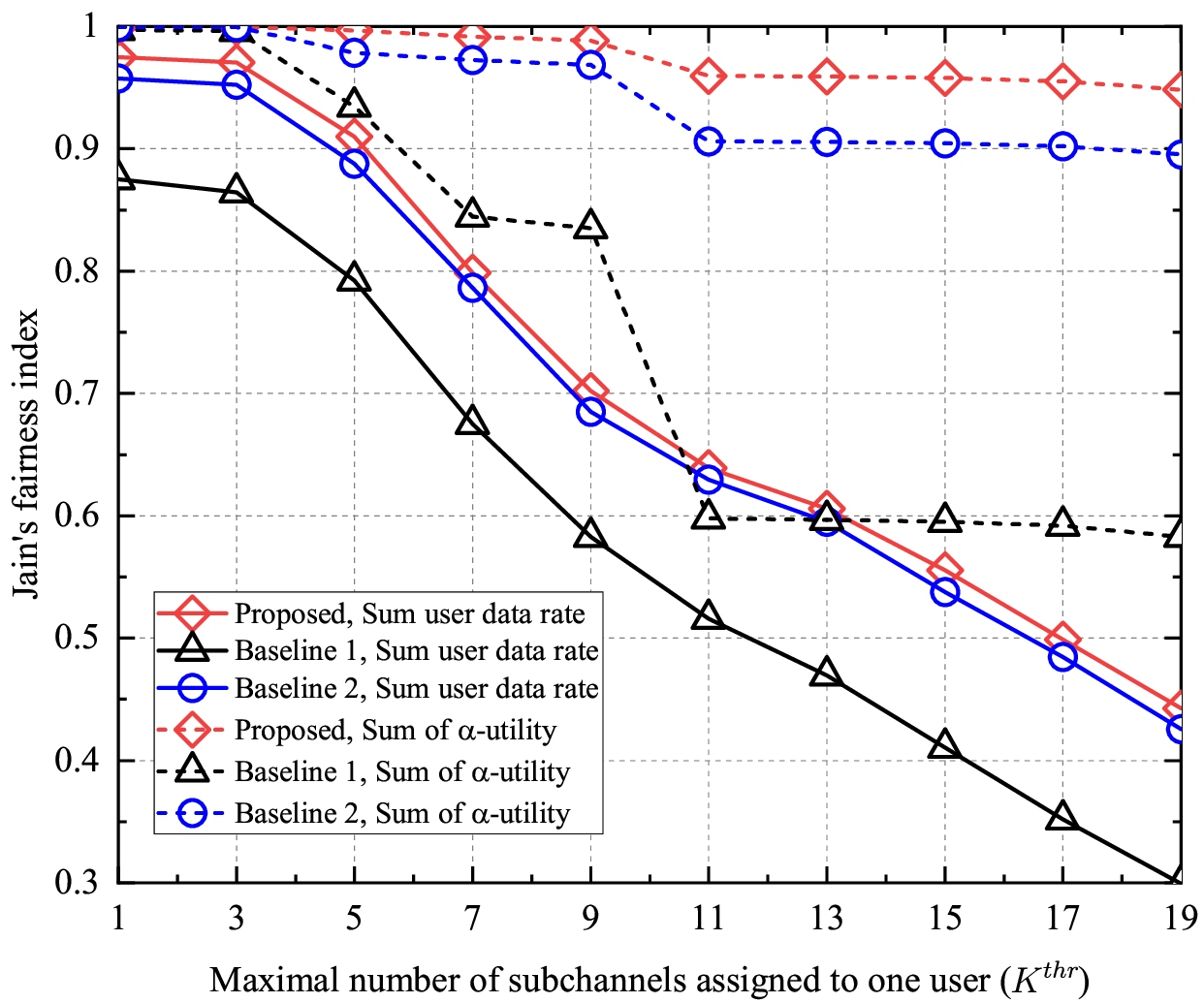}
	\vspace{-0.1in}
	\caption{Impact of maximum number of subchannels per user allocated on the user fairness ($L=7, K=20$).}
	\label{fig:GPJFI}
	\vspace{-0.1in}
\end{figure}

\subsection{The Impacts of the Maximum Number of Subchannels per User}

In Fig. \ref{fig:GPnumberMaxk}, we investigate the impact of the maximum number of subchannels that can be allocated to a single user $K^{thr}$ on the sum user data rate and the number of served users, where the number of beams configured at each satellite is $L=7$ and the number of subchannels per beam is $K=20$.
First, one can observe that the sum user data rate increases monotonically as $K^{thr}$ increases when the number of subchannels per beam is fixed.
This trend is similar to that shown in Fig. \ref{fig:GPnumberSub}, which is due to the fact that assigning most subchannels to users with the better channel states leads to higher network performance.
Correspondingly, as $K^{thr}$ increases, the number of served users decreases.

\subsection{The Evaluation of User Fairness}

Fig. \ref{fig:GPJFI} further evaluates the fairness among all users with the maximum number of subchannels each user can occupy varying, where the number of beams configured at each satellite is $L=7$ and the number of subchannels per beam is $K=20$.
Here, Jain's fairness index (JFI) \cite{jain1984quantitativemeasure} is utilized, which is expressed as $\text{JFI} = \frac{\left( \sum_{n\in \mathcal{N}}{a_n} \right) ^2}{N\times \sum_{n\in \mathcal{N}}({{a_n}^2})}$, where $a_n = \sum_{t\in \mathcal{T}}{\sum_{q\in \mathcal{Q}}{R_{q,n}^{t}}}$ for the sum user data rate and $a_n = U^{\alpha}(\sum_{t\in \mathcal{T}}{\sum_{q\in \mathcal{Q}}{R_{q,n}^{t}}})$ for the sum user $\alpha$-utility.
The value range of JFI is between 0 and 1.
When JFI gets closer to 1, a higher degree of fairness among users is reached.
Fig. \ref{fig:GPJFI} demonstrates that the fairness index of our proposal decreases as the maximum number of subchannels each user can occupy increases.
This is because users with better channel states are more likely to be allocated more subchannels in order to achieve higher network performance.
As a result, fairness among users will significantly decrease when $K^{thr}\le K$.
Meanwhile, it is also worth noting that our proposed resource management scheme is able to significantly improve JFI compared to the other two baselines.

\begin{figure}[t]
	\centering
	\subfigcapskip=-5pt
	\subfigure[]{
		\label{fig:des_UEdistribution50km}
		\includegraphics[width=0.8\linewidth]{./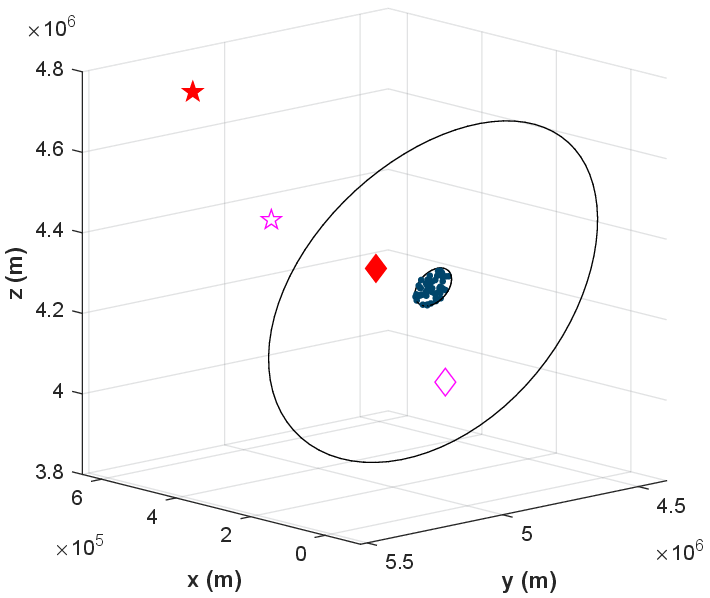}}
	\subfigure[]{
		\label{fig:des_UEdistribution5cluster}
		\includegraphics[width=0.8\linewidth]{./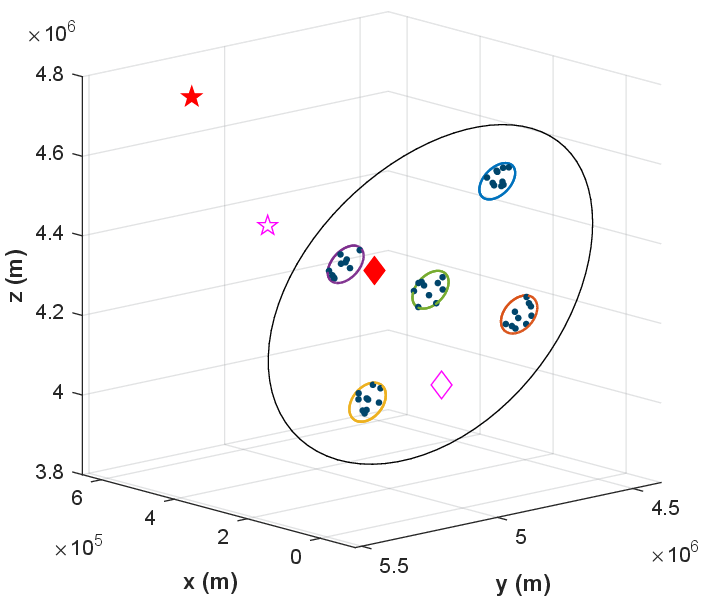}}
	\caption{The network topology at the 50-$th$ time slot under different user distributions. The points represent user terminals, the pentagrams represent satellites, and the diamonds represent sub-satellite points. (a): a high user density scenario with all users located in a circular area with a 50 $km$ radius; (b): a sparse user cluster scenario with 5 clusters of users, each cluster comprising 10 users located in a circular area with a 50 $km$ radius.}
	\label{fig:des_UEdistribution}
	\vspace{-0.1in}
\end{figure}

\begin{figure}[t]
	\centering
	\subfigcapskip=-5pt
	\subfigure[]{
		\label{fig:GP50kmUE}
		\includegraphics[width=0.95\linewidth]{./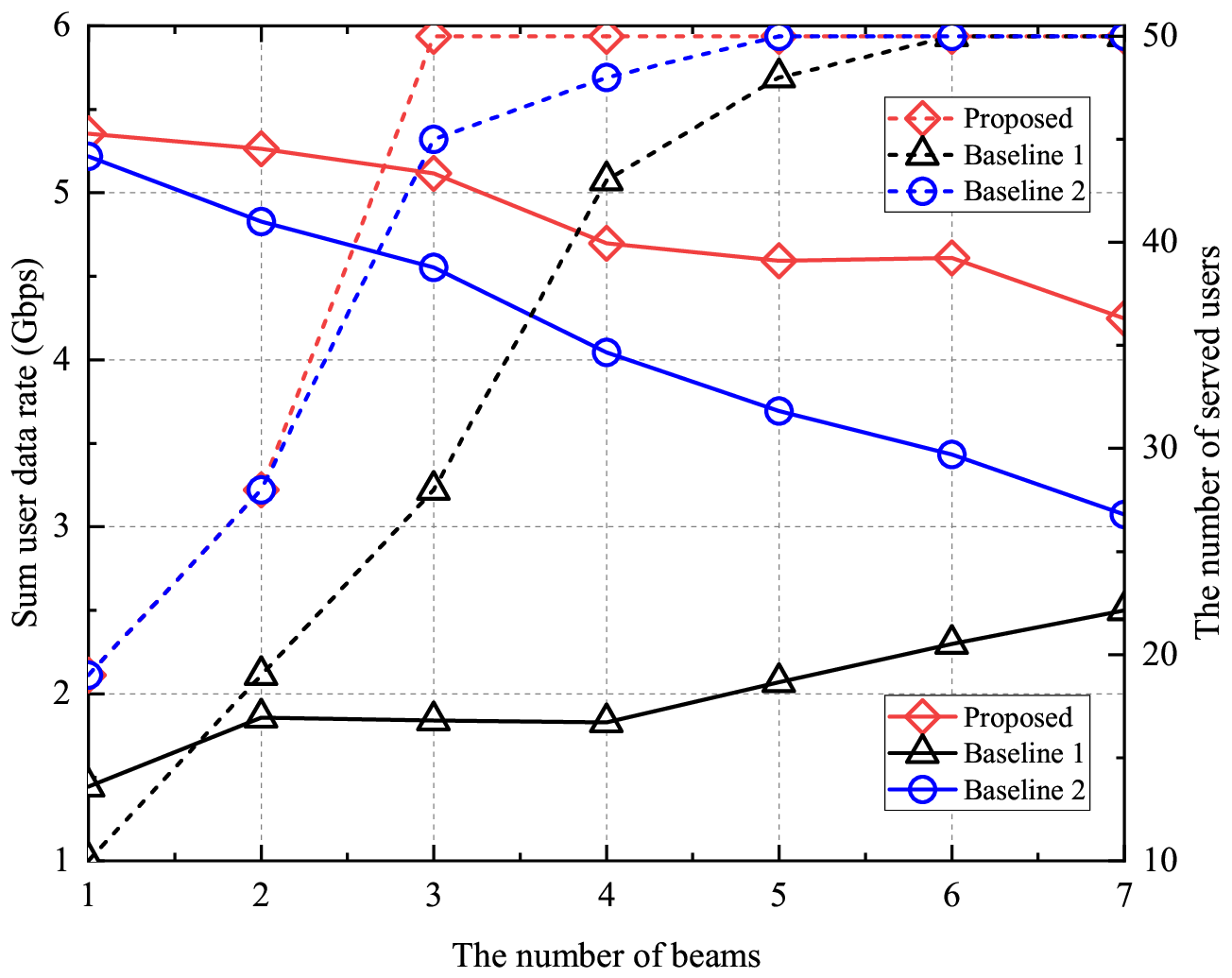}}
	\subfigure[]{
		\label{fig:GP5CUE}
		\includegraphics[width=0.95\linewidth]{./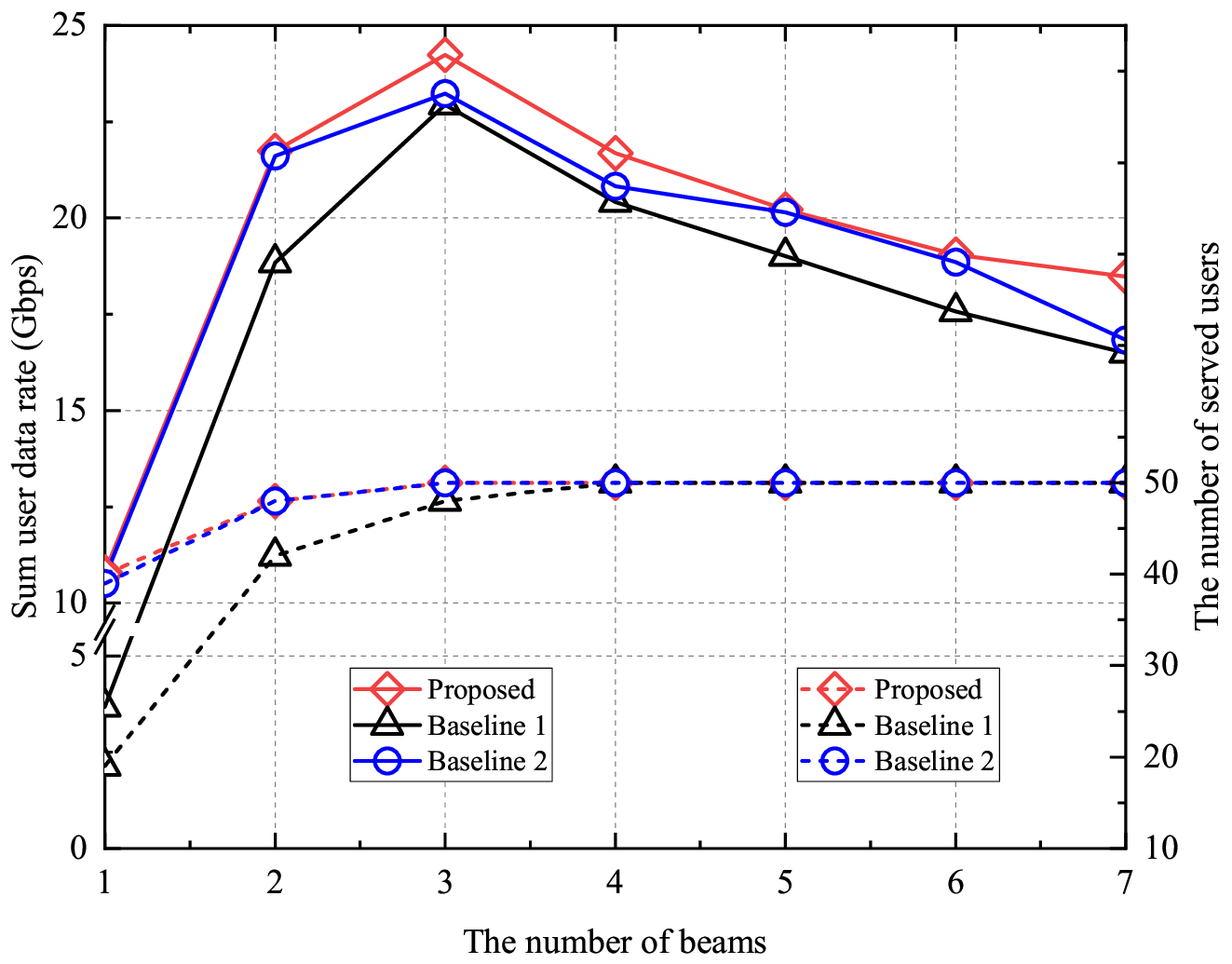}}
	\caption{Impact of the number of beams configured at each satellite on the network performance ($K=20, K^{thr}=6$). The solid lines correspond to the sum user data rate while the dash lines correspond to the number of severed users. (a): high user density scenario shown in Fig. \ref{fig:des_UEdistribution50km}; (b): sparse user cluster scenario shown in Fig. \ref{fig:des_UEdistribution5cluster}.}
	\label{fig:GPUEdistribution}
	\vspace{-0.1in}
\end{figure}

\subsection{Performance under Different User Distributions}

To further demonstrate the generality and robustness of our proposal, we investigate the performance under different user distributions.
We design two representative user distribution scenarios: one characterized by high user density, where all users are situated within a circular area with a 50 $km$ radius in the target area, and the other involving five sparsely distributed user clusters, where each cluster comprising 10 users located in a circular area with a 50 $km$ radius.
These scenarios are illustrated in Fig. \ref{fig:des_UEdistribution50km} and Fig. \ref{fig:des_UEdistribution5cluster}, respectively.
Compared to the user distribution in Fig. \ref{fig:des_UEdistributionRandom}, the distributions in Fig. \ref{fig:des_UEdistribution} show distinct user density characteristics.
As shown in Fig. \ref{fig:GPUEdistribution}, it is evident that our proposal offers significant advantages over the two baselines in terms of improving both sum user data rates and the number of served users.
For instance, in the high user density scenario with each satellite configured with $L = 7$ beams, our proposal achieves a 68\% improvement in sum user data rate compared to Baseline 1 and a 40\% improvement compared to Baseline 2.
Notably, in the high user density scenario, we observed a decreasing trend in the sum user data rate with an increase in the number of beams configured at each satellite.
This decline is attributed to stronger inter-beam interference resulting from the high user density as the number of satellite beams increases.
On the other hand, in the sparse user cluster scenario, our proposal exhibits a trend of initially increasing and subsequently decreasing sum user data rate with an increase in the number of beams.
This is due to insufficient coverage of each user cluster by a satellite beam when the total number of satellite beams is too low ($L < 3$), resulting in lower user coverage.
Conversely, when the total number of satellite beams exceeds a certain threshold, such as the number of user clusters ($L \ge 3$), it introduces significant inter-beam interference and decreases the sum user data rate.

\section{Conclusion}
\label{sec:conclusion}

In this paper, we have focused on a downlink scenario in dynamic multi-beam multi-satellite communication networks and proposed a joint optimization approach for beam direction and radio resource allocation in terms of beam direction control, subchannel assignment, and beam power allocation.
The main objective is to enhance the long-term sum user data rate while improving user fairness under onboard resource constraints.
To tackle the problem, we have decoupled it into three subproblems, namely beam direction control and time slot allocation, user subchannel assignment, and beam power allocation.
The first two subproblems are both modeled as two-sided many-to-one matching problems with externalities.
Subsequently, two swap/negotiation operation based matching algorithms have been designed, and their stability, convergence, and complexity have been analyzed.
Since the beam power allocation subproblem remains non-convex, we have exploited the successive convex approximation to solve it.
Furthermore, we have conducted extensive simulations to demonstrate the advantages of our proposal, which has increased the number of served users by up to two times and sum user data rate by up to 68\%, compared to different baselines.

\appendices

\ifCLASSOPTIONcaptionsoff
	\newpage
\fi

\bibliographystyle{IEEEtran}
\bibliography{bibUsedinPaper,bstControlForIEEEtran}

\vspace{-0.5in}
\begin{IEEEbiography}[{\includegraphics[width=1in,height=1.25in,clip,keepaspectratio]{./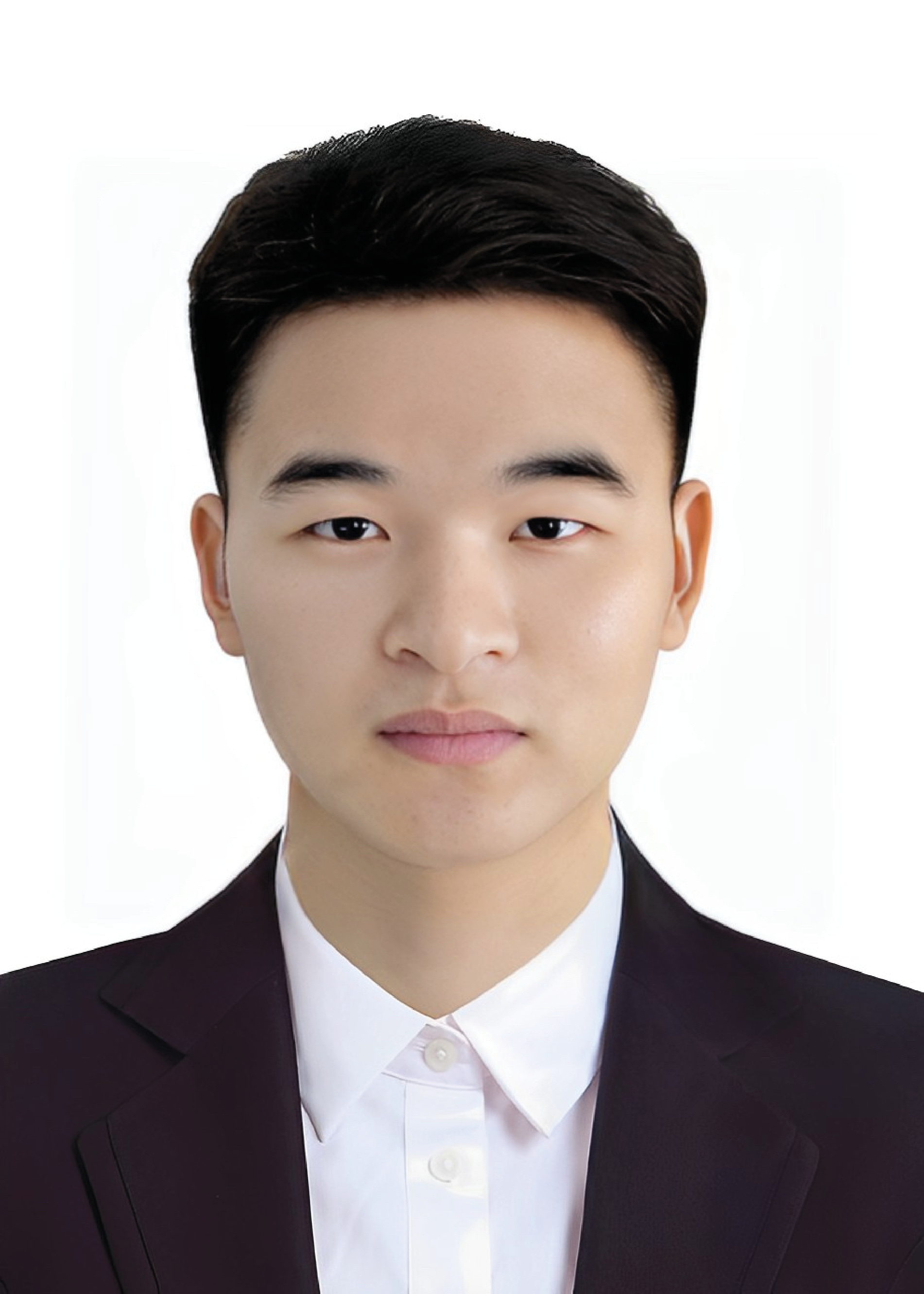}}]
	{Shuo Yuan} (Member, IEEE) received the B.S. degree from Nanchang University, Nanchang, China, and the M.E. degree in information and communication engineering from Beijing University of Posts and Telecommunications (BUPT), Beijing, China, in 2016 and 2019, respectively.
	He is currently working toward a Ph.D. degree in the State Key Laboratory of Networking and Switching Technology, BUPT, Beijing, China.
	His research interests include multi-access edge computing, intelligent computing, and LEO satellite communication.
\end{IEEEbiography}
\vspace{-0.5in}

\begin{IEEEbiography}[{\includegraphics[width=1in,height=1.25in,clip,keepaspectratio]{./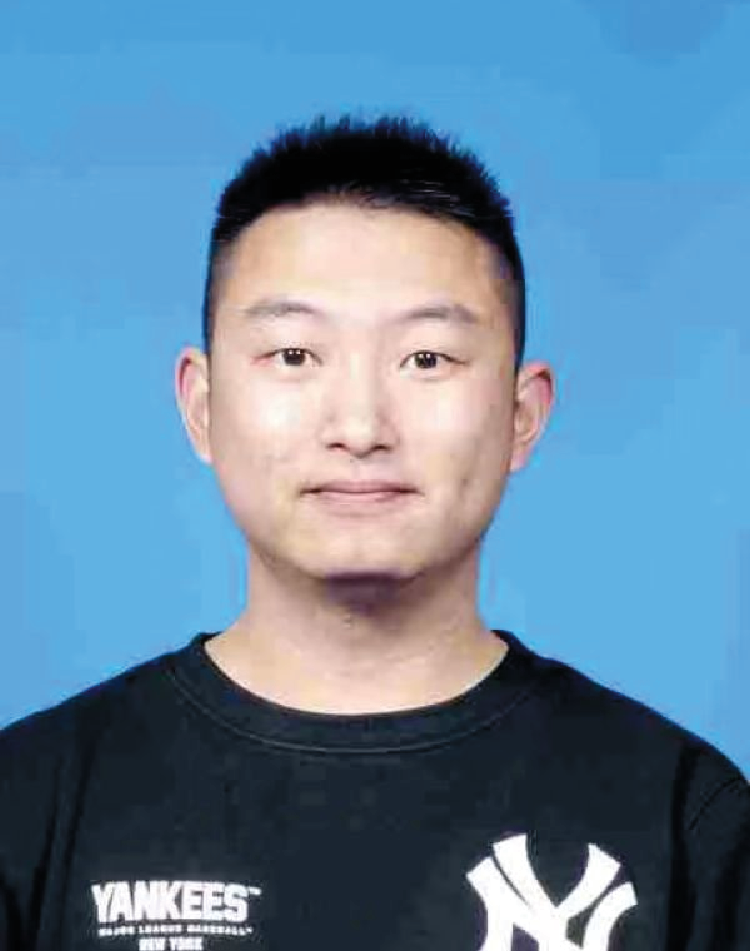}}]
	{Yaohua Sun} received the bachelor's degree (Hons.) in telecommunications engineering (with management) and the Ph.D. degree in communication engineering from Beijing University of Posts and Telecommunications (BUPT), Beijing, China, in 2014 and 2019, respectively.
	He is currently an Associate Professor with the School of Information and Communication Engineering, BUPT.
	His research interests include intelligent radio access networks and LEO satellite communication.
	He has published over 30 papers including 3 ESI highly cited papers. He has been a Reviewer for IEEE \textsc{Transactions on Communications} and IEEE \textsc{Transactions on Mobile Computing}.
\end{IEEEbiography}
\vspace{-0.5in}

\begin{IEEEbiography}[{\includegraphics[width=1in,height=1.25in,clip,keepaspectratio]{./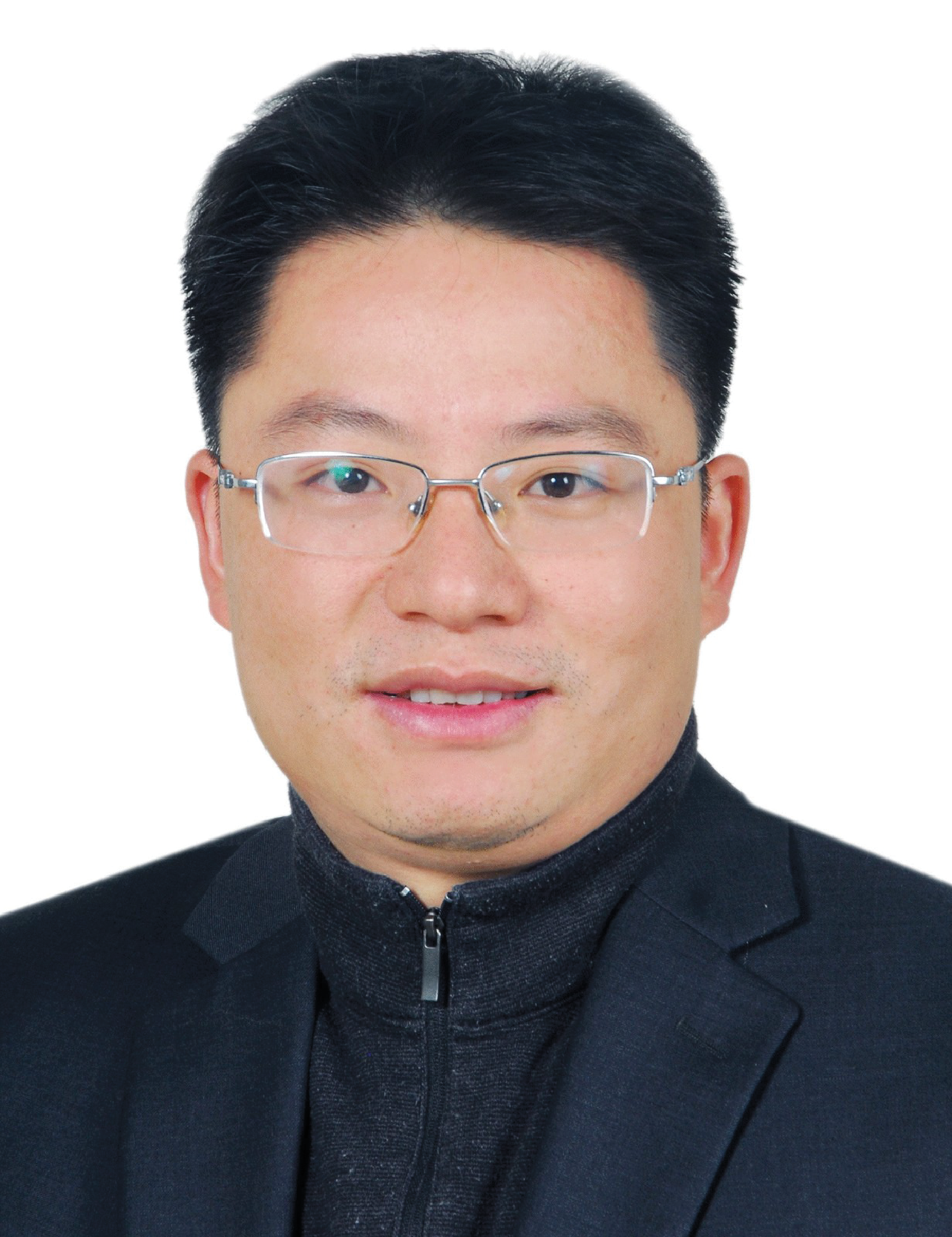}}]
	{Mugen Peng} (Fellow, IEEE) received the Ph.D. degree in communication and information systems from the Beijing University of Posts and Telecommunications, Beijing, China, in 2005. In 2014, he was an Academic Visiting Fellow at Princeton University, Princeton, NJ, USA.
	He joined BUPT, where he has been the Dean of the School of Information and Communication Engineering since June 2020, and the Deputy Director of the State Key Laboratory of Networking and Switching Technology since October 2018.
	He leads a Research Group focusing on wireless transmission and networking technologies with the State Key Laboratory of Networking and Switching Technology, BUPT.
	His main research interests include wireless communication theory, radio signal processing, cooperative communication, self-organization networking, non-terrestrial networks, and Internet of Things.
	He was a recipient of the 2018 Heinrich Hertz Prize Paper Award, the 2014 IEEE ComSoc AP Outstanding Young Researcher Award, and the Best Paper Award in IEEE ICC 2022, JCN 2016, and IEEE WCNC 2015.
	He is/was on the Editorial or Associate Editorial Board of IEEE \textsc{Communications Magazine}, IEEE \textsc{Network Magazine}, IEEE \textsc{Internet of Things Journal}, IEEE \textsc{Transactions on Vehicular Technology}, and IEEE \textsc{Transactions on Network Science and Engineering}.
\end{IEEEbiography}
\vspace{-0.5in}

\begin{IEEEbiography}[{\includegraphics[width=1in,height=1.25in,clip,keepaspectratio]{./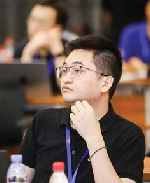}}]
	{Renzhi Yuan} (Member, IEEE) received the B.S. degree in mechanical engineering from Tongji University, Shanghai, China, in 2011, the M.S. degree in precision instrument from Tsinghua University, Beijing, China, in 2017, and the Ph.D. degree in electrical engineering from the University of British Columbia, Kelowna, BC, Canada, in 2021. He is currently a Distinguished Research Fellow with the School of Information and Communication Engineering and the State Key Laboratory of Networking and Switching Technology, Beijing University of Posts and Telecommunications. His research interests include the space-air-ground integrated communications, wireless optical communications, and quantum communications.
\end{IEEEbiography}

\end{document}